\documentclass[11pt,reqno]{amsart}

\usepackage[margin=1in]{geometry}
\usepackage[T1]{fontenc}
\usepackage{lmodern}
\usepackage{microtype}
\usepackage{mathtools}
\usepackage{booktabs}
\usepackage{float}
\usepackage{amssymb}
\usepackage{aliascnt}
\usepackage{xcolor}
\usepackage[
  colorlinks=true,
  linkcolor=blue!55!black,
  citecolor=blue!55!black,
  urlcolor=blue!55!black
]{hyperref}
\usepackage[nameinlink,capitalize,noabbrev]{cleveref}

\definecolor{revisioncyan}{RGB}{0,0,0}
\newcommand{\rev}[1]{{\color{revisioncyan}#1}}
\newcommand{\Paa}{{\color{revisioncyan}P_{\mathrm{aa}}}}

\allowdisplaybreaks
\newtheorem{theorem}{Theorem}[section]

\newaliascnt{proposition}{theorem}
\newtheorem{proposition}[proposition]{Proposition}
\aliascntresetthe{proposition}

\newaliascnt{lemma}{theorem}
\newtheorem{lemma}[lemma]{Lemma}
\aliascntresetthe{lemma}

\newaliascnt{corollary}{theorem}
\newtheorem{corollary}[corollary]{Corollary}
\aliascntresetthe{corollary}

\newaliascnt{definition}{theorem}
\newtheorem{definition}[definition]{Definition}
\aliascntresetthe{definition}

\theoremstyle{remark}
\newaliascnt{remark}{theorem}
\newtheorem{remark}[remark]{Remark}
\aliascntresetthe{remark}

\newaliascnt{example}{theorem}

\aliascntresetthe{example}

\crefname{theorem}{Theorem}{Theorems}
\crefname{proposition}{Proposition}{Propositions}
\crefname{lemma}{Lemma}{Lemmas}
\crefname{corollary}{Corollary}{Corollaries}
\crefname{definition}{Definition}{Definitions}
\crefname{remark}{Remark}{Remarks}
\crefname{example}{Example}{Examples}

\DeclareMathOperator{\Tr}{Tr}
\DeclareMathOperator{\rank}{rank}
\DeclareMathOperator{\Alt}{Alt}
\DeclareMathOperator{\SR}{SR}
\DeclareMathOperator{\ran}{ran}

\newcommand{\cL}{\mathcal L}
\newcommand{\cR}{\mathcal R}
\newcommand{\id}{\mathrm{id}}
\newcommand{\HS}{\mathrm{HS}}
\newcommand{\ketbra}[2]{\lvert #1\rangle\!\langle #2\rvert}
\newcommand{\ip}[2]{\langle #1,#2\rangle}
\newcommand{\norm}[1]{\lVert #1\rVert}
\newcommand{\abs}[1]{\lvert #1\rvert}
\newcommand{\C}{\mathbb C}

\hypersetup{
  pdftitle={Two-copy nondistillability of Werner states: sharp partial-trace inequalities and finite-copy extensions},
  pdfauthor={Kishor Bharti, Rishikesh Gajjala, and Tobias Haug},
  pdfsubject={Exact two-copy distillability, arbitrary-copy reformulations, structured witnesses, and finite-copy Werner bounds},
  pdfkeywords={partial trace, Werner state, entanglement distillability, tensor-stable 2-positivity, singular values, reduction map}
}

\title[Two-copy Werner nondistillability and finite-copy bounds]
{Two-copy nondistillability of Werner states:\\
sharp partial-trace inequalities and finite-copy extensions}

\author{Kishor Bharti}
\address{Joint Center for Quantum Information and Computer Science (QuICS),
University of Maryland, College Park, Maryland 20742, USA}
\email{kishor.bharti1@gmail.com}

\author{Rishikesh Gajjala}
\address{Center for Quantum and Topological Systems,
New York University Abu Dhabi, Abu Dhabi, United Arab Emirates}
\email{r.gajjala@nyu.edu}

\author{Tobias Haug}
\address{Quantum Research Center, Technology Innovation Institute,
Abu Dhabi, United Arab Emirates}
\email{tobias.haug@tii.ae}

\date{26 July 2026}

\begin{document}

\begin{abstract}
\begingroup
\color{revisioncyan}
We solve the two-copy distillability problem for Werner states in every local dimension. Our main matrix result is a sharp, dimension-free inequality: for every rank-at-most-two operator, the sum of the squared Hilbert--Schmidt norms of its two partial traces is bounded by twice its squared Hilbert--Schmidt norm plus one half of the squared modulus of its trace. This implies that a Werner state $\rho_\alpha$ is two-copy distillable if and only if $\alpha<-1/2$. In particular, the two-ququart state $\rho^{(4)}_{-1/2}$ is two-copy undistillable, resolving Problem 5 of Horodecki, Rudnicki, and \.Zyczkowski. For an arbitrary finite number $k$ of copies, we give three exact formulations of the remaining problem. At the endpoint $\alpha=-1/2$, undistillability is equivalent to nonnegativity of the endpoint partial-trace form on every rank-at-most-two operator. We also derive an equivalent hierarchy of operator inequalities $H_k(\psi)\succeq0$ for pure states with a maximally mixed qubit marginal. The two-copy proof does not formally induct, because partial trace can increase rank and $2$-positivity is not generally preserved by tensor products. We prove two rigorous many-copy extensions. First, the quadratic form factorizes exactly on tensor-factorized witnesses; for any such decomposition, the endpoint inequality holds if its possible rank-two factor is supported on a block containing at most two copies. Second, we construct explicit constants $\gamma_k>0$ such that $\alpha\ge-\gamma_k$ implies $k$-copy undistillability in every dimension. The initial proofs were generated by ChatGPT 5.6 Sol; the authors have verified and rewritten them to enhance readability and provide additional context.
\endgroup
\end{abstract}

\maketitle
\enlargethispage{3pt}

\section{Introduction}
\label{sec:introduction}

Entanglement distillation asks when several noisy bipartite states can be
converted, by local operations and classical communication, into a smaller
number of high-quality entangled pairs.  A standard criterion says that a
bipartite state $\rho$ is $k$-copy distillable precisely when
$(\rho^\Gamma)^{\otimes k}$ has a negative expectation value on a vector of
Schmidt rank at most two; here $\Gamma$ denotes partial transpose
\cite{HorodeckiEtAl1997,HorodeckiEtAl1998,DurEtAl2000}.  Every state with
positive partial transpose is therefore undistillable.  Whether every state
with negative partial transpose is distillable remains a central open problem
\cite{DiVincenzoEtAl2000,HorodeckiRudnickiZyczkowski2022,
MullerHermesReebWolf2016}.

The Werner family is the canonical symmetric test case.  For $d\ge2$, on
$\C^d\otimes\C^d$ let
\begin{equation}
\label{eq:werner-intro}
 \rho_\alpha=\rho^{(d)}_\alpha
 :=\frac{I_{d^2}+\alpha F_d}{d^2+\alpha d},
 \qquad -1\le \alpha\le 1,
\end{equation}
where $F_d(x\otimes y)=y\otimes x$.  These states are separable exactly in the
positive-partial-transpose range $\alpha\ge-1/d$
\cite{Werner1989,VollbrechtWerner2001}, and they are one-copy distillable
exactly when $\alpha<-1/2$ \cite{DurEtAl2000}.  Thus, for $d\ge3$, the interval
\[
 -\frac12\le \alpha<-\frac1d
\]
consists of NPT states that are not one-copy distillable.  The first genuinely
collective case is two copies.  It has generated semidefinite,
operator-space, matrix-inequality, and partial-trace formulations
\cite{PankowskiEtAl2010,JohnstonKribs2010,CostaRico2025,LiuChen2025,
RicoWolf2025,LiuChen2026}.
The boundary instance $d=4$ and $\alpha=-1/2$ was singled out as
Problem~5 in the list of Horodecki, Rudnicki, and {\.{Z}}yczkowski
\cite{HorodeckiRudnickiZyczkowski2022}.  It is NPT because
$-1/2<-1/4$, but it is not one-copy distillable.  Earlier partial-trace
methods proved two-copy undistillability in a substantial parameter range,
including $\alpha\ge-1/3$ \cite{RicoWolf2025}, but did not reach this
endpoint.

The two-copy algebraic core is the following sharp rank-two inequality.
Throughout, $\norm{\cdot}_2$ denotes the Hilbert--Schmidt norm and
$s_1(X)\ge s_2(X)\ge\cdots$ are the singular values of $X$.

\begin{theorem}[Sharp rank-two partial-trace inequality]
\label{thm:intro-main}
Let $U$ and $V$ be finite-dimensional complex Hilbert spaces.  If
$C\in\cL(U\otimes V)$ and $\rank C\le2$, then
\begin{equation}
\label{eq:intro-main}
 \norm{\Tr_U C}_2^2+\norm{\Tr_V C}_2^2
 \le 2\norm{C}_2^2+\frac12\abs{\Tr C}^2.
\end{equation}
If $\dim U,\dim V\ge2$, the two constants are jointly sharp.
\end{theorem}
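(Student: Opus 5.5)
The plan is to recast \eqref{eq:intro-main} as positivity of a fixed operator against $C\otimes C^*$, and then use the rank-two structure to reduce to a $2\times2$ problem. Identify a vector $x\in U\otimes V$ with a matrix $X\colon \bar V\to U$. Then for $C=\ketbra{x}{y}$ we have $\Tr_V C=XY^*$, $\Tr_U C\simeq (Y^*X)^{T}$ and $\Tr C=\ip{y}{x}=\Tr(Y^*X)$. Expanding the squares gives
\[
 \norm{\Tr_V C}_2^2=\Tr\bigl[(C\otimes C^*)(F_U\otimes I)\bigr],\qquad
 \norm{C}_2^2=\Tr\bigl[(C\otimes C^*)(F_U\otimes F_V)\bigr],
\]
with the analogous formulas for $\Tr_U C$ and for $\abs{\Tr C}^2=\Tr[C\otimes C^*]$. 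The inequality is therefore $\Tr[(C\otimes C^*)W]\ge0$ for
\[
 W=2F_UF_V+\tfrac12 I-F_U-F_V .
\]
On the joint eigenspaces of $(F_U,F_V)$, $W$ has eigenvalues $\tfrac12,-\tfrac32,-\tfrac32,\tfrac92$. So $W$ is indefinite, and the rank-two hypothesis must be used in an essential way. This is the familiar Schmidt-rank-two witness structure from distillability.

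\textbf{Step 1 (reduction).} Take a singular value decomposition $C=s_1\ketbra{x_1}{y_1}+s_2\ketbra{x_2}{y_2}$ with $s_i\ge0$, $\{x_1,x_2\}$ orthonormal and $\{y_1,y_2\}$ orthonormal. Both sides of \eqref{eq:intro-main} are quadratic forms in $(s_1,s_2)$. Writing $t_i=\ip{y_i}{x_i}$, the inequality becomes copositivity of the real symmetric $2\times2$ matrix
\[
 N=2I+\tfrac12\,\mathrm{Re}\,(t\,t^*)-M,
\]
where
\[
 M_{ij}=\mathrm{Re}\bigl[\Tr(X_iY_i^*Y_jX_j^*)+\Tr(Y_i^*X_iX_j^*Y_j)\bigr].
\]
I only need $s_i\ge0$, since the phases have been absorbed into the $y_i$. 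The diagonal entries are handled by the rank-one case. Submultiplicativity gives $\norm{XY^*}_2\le\norm{X}_2\norm{Y}_2=1$, and likewise for $Y^*X$, so $M_{ii}\le2$. In fact $N_{ii}=\delta_i+\tfrac12\abs{t_i}^2\ge0$, where $\delta_i:=2-M_{ii}$ is the rank-one deficit. It remains to prove the off-diagonal bound
\[
 M_{12}-\tfrac12\mathrm{Re}(t_1\bar t_2)\le \sqrt{N_{11}N_{22}},
\]
which is the copositivity condition for a $2\times 2$ matrix. Without loss of generality all $X_i,Y_i$ live on the span of their column and row spaces, so $\dim U,\dim V\le4$.

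\textbf{Step 2 (the key estimate, the main obstacle).} The difficulty is to control the cross term $M_{12}$ using only the orthogonality relations $\Tr(X_1^*X_2)=0$ and $\Tr(Y_1^*Y_2)=0$. The bound must be in terms of the deficits $\delta_i$ and the traces $t_i$. Heuristically, $M_{ii}$ is close to $2$ only when $X_i$ and $Y_i$ are nearly rank one and aligned, which forces $t_i$ to be large. Orthogonality then prevents the second pair from aligning with the first except through the trace channel.

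What I would try first is to write each $X_i$ and $Y_i$ in the singular bases of $X_1$ and quantify the deficit. I would use
\[
 \delta_i\ge \bigl(1-\norm{X_i}_{\infty}^2\bigr)+\bigl(1-\norm{Y_i}_\infty^2\bigr)
\]
up to alignment terms, and estimate $\abs{\Tr(X_1Y_1^*Y_2X_2^*)}$ by Cauchy--Schwarz after splitting each matrix into its top singular component and a remainder. If this direct route becomes messy, the fallback is to exploit the $U(U)\times U(V)$ and $C\mapsto C^*$ symmetries. With these I would normalize to a small number of real parameters on $\C^2\otimes\C^2$ or $\C^4\otimes\C^4$, and then verify the $2\times2$ determinant condition as a polynomial inequality, for instance via a sum-of-squares certificate.

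\textbf{Step 3 (sharpness).} For joint sharpness when $\dim U,\dim V\ge2$, I would exhibit two families of operators supported on a $\C^2\otimes\C^2$ block. In the first, $\Tr C=0$ and the left-hand side tends to $2\norm{C}_2^2$; candidates are combinations of the rank-two operators $\ketbra{x_1}{y_1}\pm\ketbra{x_2}{y_2}$ built from product and maximally entangled vectors, tuned to make $t_1=-t_2$. The second family has equality with $\abs{\Tr C}^2$ dominant, which forces the constant $\tfrac12$. Once the constant $2$ is pinned down, I would take the minimizers of $N$ along the boundary of the copositive cone found in Step 2 as these families.
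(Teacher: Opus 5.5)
\textbf{There is a genuine gap: Step 2 is the entire theorem, and the proposal gives no argument for it.}

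Your reformulation and Step 1 are correct bookkeeping. $W=2F_UF_V+\frac12I-F_U-F_V$ has the eigenvalues you list. In singular-value coordinates the claim is exactly copositivity of $N$, with $N_{ii}\ge0$ coming from the rank-one case. For the key estimate, though, you offer only a heuristic and a computational fallback, and both fail concretely.

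\emph{The reduction to $\dim U,\dim V\le4$ is false.} $\ran C$ is two-dimensional in $U\otimes V$, but the local supports of $x_i,y_i$ are the column and row spaces of $X_i,Y_i$. Their ranks are unbounded: take $x_1$ maximally entangled in $\C^d\otimes\C^d$. So there is no finite-parameter normal form on which to run a sum-of-squares certificate, and the bound must be proved dimension-free.

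\emph{The heuristic misidentifies where the difficulty lies.} Take $C=P\otimes\ketbra{v}{w}$ with $P$ a rank-two projection and $v\perp w$. Both rank-one pieces have the same $\Tr_U$, namely $\ketbra{v}{w}$. Thus $t_1=t_2=0$, yet $M_{11}=M_{22}=M_{12}=1$ and $N_{11}=N_{22}=-N_{12}=1$. The cross term is saturated entirely outside the trace channel, despite $x_1\perp x_2$ and $y_1\perp y_2$. For $v=w$ the pieces are perfectly aligned ($M_{ii}=2$), and $N$ is again singular. Indeed, equality holds for every $\tau=\ip{w}{v}$. Any top-singular-component-plus-remainder Cauchy--Schwarz estimate of $M_{12}$ would have to be exact along this whole family, and nothing in the proposal explains why it would be.

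\textbf{The missing ingredient is a structural, dimension-free reason why the cross terms cannot beat the diagonal.} The paper gets this by linearizing the problem:
\begin{itemize}
\item Write $C=\sum_r\ketbra{e_r}{d_r}$ with $e_1,e_2$ orthonormal, and set $\psi=\frac1{\sqrt2}\sum_re_r\otimes q_r$ and $\delta=\sum_rd_r\otimes q_r$ in $U\otimes V\otimes\C^2$.
\item The inequality becomes $\ip{\delta}{H_\psi\delta}\ge0$ with $H_\psi=I+\frac12\rho-\rho_{UQ}\otimes I_V-I_U\otimes\rho_{VQ}$. This is proved for all $\delta$ at once.
\item Decompose $H_\psi=\Phi(\rho^{T_{UV}})+\frac12(I-\rho)$ with $\Phi=\Lambda_U\otimes\Lambda_V\otimes\id$ completely positive. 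Split $\rho^{T_{UV}}=\frac12P_+-\frac12P_-$ and show $\Phi(P_-)\preceq I-\rho$.
\item The norm part of that bound is the sharp double-skew Ky Fan inequality $s_1(K)^2+s_2(K)^2\le\frac12\norm{K}_2^2$ for $K\in\Alt(U)\otimes\Alt(V)$. The support part uses the complex symmetry of such $K$.
\end{itemize}

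Your own setup already points here. Since $W=4P_{\mathrm{aa}}+F_UF_V-\frac12I$ with $P_{\mathrm{aa}}=\frac14(I-F_U)(I-F_V)$, the inequality is equivalent to $-4\Tr[(C\otimes C^*)P_{\mathrm{aa}}]\le\norm{C}_2^2-\frac12\abs{\Tr C}^2$. What Step 2 lacks is exactly a sharp bound on this doubly antisymmetric sector for rank-two data.

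\textbf{Step 3 closes at once} with $C=P\otimes\ketbra{v}{w}$. Equality at $v\perp w$ pins the coefficient $2$ with $\Tr C=0$. Equality at $\tau\neq0$ then pins $\frac12$.
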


It gives the complete two-copy answer.

\begin{corollary}[Exact two-copy threshold]
\label{cor:intro-werner}
For every $d\ge2$ and $-1\le\alpha\le1$, the Werner state $\rho_\alpha$ is
two-copy distillable if and only if
\[
 \alpha<-\frac12.
\]
Consequently, for every $d\ge3$ and
$-\frac12\le\alpha<-1/d$, the state $\rho_\alpha$ is NPT and two-copy
undistillable.
\end{corollary}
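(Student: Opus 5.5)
The plan is to reduce two-copy distillability to the criterion recalled in the introduction, and then to the partial-trace quadratic form controlled by \cref{thm:intro-main}. The direction $\alpha<-1/2\Rightarrow$ two-copy distillable is immediate. For such $\alpha$ the state is already one-copy distillable \cite{DurEtAl2000}. Tensoring a one-copy witness with a suitable product vector on the second copy gives a Schmidt-rank-$\le2$ vector with negative expectation on $(\rho_\alpha^\Gamma)^{\otimes2}$: take the second-copy vector in the positive part of $\rho_\alpha^\Gamma$, e.g.\ a product vector. For $\alpha\ge-1/d$ the state is PPT, hence undistillable. So the real content is the range $-1/2\le\alpha<0$; I treat all $\alpha\in[-1/2,0]$ uniformly.

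\textbf{Step 1 (computing the form).} Up to the positive normalization, $\rho_\alpha^\Gamma\propto I+\alpha F^\Gamma$. Here $F^\Gamma=\ketbra{\Omega}{\Omega}$ with $\Omega=\sum_i e_i\otimes e_i$ unnormalized. A vector $\psi$ on $(A_1A_2)\otimes(B_1B_2)$ of Schmidt rank $\le2$ is identified, via $\psi=\sum_{j} e_j\otimes Xe_j$ with $A_k\cong B_k\cong\C^d$, with an operator $X$ on $U\otimes V=\C^d\otimes\C^d$ (copy $1$ and copy $2$) of rank $\le2$. Contracting $A_1$ against $B_1$ through $\Omega$ then produces the partial trace over the first factor. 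Expanding $(I+\alpha F^\Gamma)^{\otimes2}$ should give
\[
 \langle\psi,(I+\alpha F^\Gamma)^{\otimes2}\psi\rangle
 =\norm{X}_2^2+\alpha\bigl(\norm{\Tr_U X}_2^2+\norm{\Tr_V X}_2^2\bigr)+\alpha^2\abs{\Tr X}^2 .
\]
Verifying this index bookkeeping, including where transposes land, is the main routine obstacle. Any transpose is harmless because it preserves Hilbert--Schmidt norms, the modulus of the trace, and rank.

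\textbf{Step 2 (positivity).} Write $N=\norm{X}_2^2$, $S=\norm{\Tr_UX}_2^2+\norm{\Tr_VX}_2^2$, $T=\abs{\Tr X}^2$, and $a=\abs{\alpha}\le1/2$. \cref{thm:intro-main} gives $S\le2N+T/2$, so the form is at least
\[
 g(T):=N-a\Bigl(2N+\frac{T}{2}\Bigr)+a^2T .
\]
Since $X$ has rank $\le2$, Cauchy--Schwarz on the singular values gives $T\le(\sum_i s_i)^2\le2N$. The function $g$ is linear in $T$ with slope $a^2-a/2\le0$, so it is minimized at $T=2N$. This yields
\[
 g(T)\ge N\bigl(1-3a+2a^2\bigr)=N(1-a)(1-2a)\ge0 .
\]
Hence no Schmidt-rank-two vector detects $\rho_\alpha$ on two copies for $\alpha\ge-1/2$, which completes the equivalence.

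The ``consequently'' clause follows by combining this with separability exactly for $\alpha\ge-1/d$: the state is NPT precisely when $\alpha<-1/d$.
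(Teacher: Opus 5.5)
Your proof is correct and follows the paper's argument. Step 2 is exactly the paper's estimate \cref{eq:q-lower-one,eq:q-lower-two}, and Step 1 is the $k=2$ case of \cref{prop:costa-rico-k}, which you derive directly instead of citing. The one real difference is the distillable direction: you pad a one-copy witness with a product vector, as the paper itself does in \cref{cor:gamma-k}, whereas the paper's two-copy proof uses the explicit witness $C=P\otimes\ketbra{v}{w}$ with $q_2(\alpha,C)=2+4\alpha<0$.
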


We also discuss what this result
does, and does not, imply for $k$ copies.
\rev{Costa Rico's} partial-trace criterion \cite{CostaRico2025} associates to
$C\in\cL((\C^d)^{\otimes k})$ the form
\begin{equation}
\label{eq:intro-qk}
 q_k(\alpha,C)
 :=\sum_{J\subseteq[k]}\alpha^{\abs J}\norm{\Tr_JC}_2^2.
\end{equation}
The state $\rho_\alpha$ is $k$-copy distillable exactly when
$q_k(\alpha,C)<0$ for some $C$ of rank at most two.  At the endpoint
$\alpha=-1/2$, the unrestricted many-copy conjecture is therefore
\begin{equation}
\label{eq:intro-endpoint-conjecture}
 q_k\!\left(-\frac12,C\right)\ge0
 \quad\text{for every }k\ge1\text{ and every }\rank C\le2.
\end{equation}
For $k=2$, \cref{eq:intro-endpoint-conjecture} is exactly
\cref{eq:intro-main}.  For $k\ge3$, it is not a formal consequence of the
two-copy theorem.

Our arbitrary-copy results have three parts.  First, we give exact equivalent
forms of \cref{eq:intro-endpoint-conjecture}.  If
\begin{equation}
\label{eq:intro-Phi}
 \Phi_d(X)=\Tr(X)I_d-\frac12X,
\end{equation}
then the $k$-copy endpoint is equivalent to $2$-positivity of
$\Phi_d^{\otimes k}$.  It is also equivalent to positivity of an explicit
operator $H_k(\psi)$ for every pure state on
$U_1\otimes\cdots\otimes U_k\otimes\C^2$ whose qubit marginal is maximally
mixed.  The first three members are
\begin{align*}
 H_1(\psi)&=I-\rho,\\
 H_2(\psi)&=I-\rho_{1Q}-\rho_{2Q}+\frac12\rho,\\
 H_3(\psi)&=I-\sum_i\rho_{iQ}
 +\frac12\sum_{i<j}\rho_{ijQ}-\frac14\rho,
\end{align*}
with identity factors suppressed.  The proof below establishes $H_2\succeq0$;
the assertion $H_3\succeq0$ for all admissible $\psi$ is already equivalent
to the unresolved three-copy endpoint.

Second, we prove an exact tensorization theorem for structured witnesses.
The form $q_k$ factorizes over tensor blocks.  In any tensor-block
decomposition of a nonzero rank-at-most-two witness, at most one factor has
rank two.  We prove endpoint nonnegativity whenever that factor, if present,
is supported on a block containing at most two copies.  Consequently, a
negative witness cannot admit any tensor decomposition that localizes its
rank-two factor to such a block.

Third, we sharpen an earlier dimension-free finite-copy bound of
Qi, Gupur, Wu, and Guo \cite{QiGupurWuGuo2024}.  Their threshold is the
zero of $1+(1-a)^k-(1+a)^k$; the complement-pair argument below gives a
strictly larger threshold for every fixed $k$.  For each
$k\ge2$, let $\gamma_k$ be the unique zero in $(0,1/2]$ of the polynomial
$p_k$ defined in \cref{eq:pk-odd,eq:pk-even} below.  We prove
\begin{equation}
\label{eq:intro-gamma-result}
 \alpha\ge-\gamma_k
 \quad\Longrightarrow\quad
 \rho_\alpha\text{ is $k$-copy undistillable in every dimension}.
\end{equation}
The first values are
\[
 \gamma_2=\frac12,
 \qquad \gamma_3=\frac16,
 \qquad \gamma_4\approx0.1241036,
 \qquad \gamma_5\approx0.0981112,
\]
and
\begin{equation}
\label{eq:intro-gamma-asymptotic}
 k\gamma_k\longrightarrow
 \operatorname{arsinh}\!\left(\frac12\right)
 =\log\!\left(\frac{1+\sqrt5}{2}\right).
\end{equation}
For $k=2$, the bound is exact and recovers \cref{cor:intro-werner}.  For
$k\ge3$, it gives a rigorous extension but does not close the interval
$[-1/2,-\gamma_k)$.

In dimension four, the partial-trace theorem also yields the sharp
Kronecker-sum inequality
\begin{equation}
\label{eq:intro-kronecker}
 s_1(A\otimes I_4+I_4\otimes B)^2
 +s_2(A\otimes I_4+I_4\otimes B)^2
 \le 2\bigl(\norm{A}_2^2+\norm{B}_2^2\bigr)
\end{equation}
for traceless $A,B\in M_4(\C)$, with equality examples.

\subsection*{Provenance and concurrent work}

The discovery process was described publicly by Gajjala on X and by Bharti on
LinkedIn \cite{GajjalaX2026,BhartiLinkedIn2026}.  The posts reported an
experiment in which GPT-5.6 Sol was given basic prompts about the two-copy
Werner-state problem and returned a proof.  We record this history
because the mode of discovery is scientifically relevant.  The software is
not listed as an author.

During the same few days, Fu, Gao, and Park independently obtained the exact
two-copy threshold and posted
\emph{A solution to 2-copy distillability of Werner states} on 23 July 2026
\cite{FuGaoPark2026}.  Their argument also identifies the sharp constant
$1/2$ for a tensor antisymmetric projection.  It then proceeds through a
block-operator inequality and a variational analysis of diagonal,
off-diagonal, and transverse blocks.  The proof presented here instead uses a
dual double-skew Ky Fan inequality, co-reduction-map positivity for a
maximally mixed qubit marginal, and the standalone partial-trace inequality
\cref{eq:intro-main}.

Fraser, Huber, Pozsgay, and Vona privately communicated the concurrent
manuscript
\emph{On the two-copy distillability of Werner states} on 25 July 2026
\cite{FraserHuberPozsgayVona2026}.  According to the version communicated to
the authors, it proves the stronger rank-constrained
inequality
\[
 \norm{\Tr_U C}_2^2+\norm{\Tr_V C}_2^2
 \le r\norm{C}_2^2+\frac1r\abs{\Tr C}^2,
 \qquad \rank C\le r.
\]

We regard these works as concurrent and complementary, and we make no priority
claim beyond documenting the chronology.  The additional contribution of the
present manuscript is its systematic treatment of finite $k$: exact map and
marginal reformulations, tensorization for structured witnesses, and a
strict sharpening of the previously known dimension-free $k$-copy intervals
of \cite{QiGupurWuGuo2024}.  None of the concurrent two-copy approaches
currently settles the unrestricted endpoint for $k\ge3$.

\subsection*{Organization.}
The paper is organized as follows.  \Cref{sec:preliminaries} fixes notation
and states the arbitrary-copy rank-two criterion.  \Cref{sec:k-formulations}
derives the map and qubit-marginal formulations and isolates the obstruction
to induction.  \Cref{sec:double-skew,sec:qubit,sec:partial-trace} give the
geometric proof of the sharp two-copy inequality.
\Cref{sec:werner} derives the exact two-copy threshold.
\Cref{sec:block-tensorization} proves the structured many-copy theorem, and
\cref{sec:k-bounds} gives the explicit numbers $\gamma_k$.
\Cref{sec:kronecker} records matrix consequences before the final discussion.

\section{Preliminaries}
\label{sec:preliminaries}

All Hilbert spaces are finite-dimensional and complex.  Orthonormal bases are
fixed whenever transpose or entrywise conjugation is used.  The
Hilbert--Schmidt inner product is
\[
 \ip{X}{Y}_{\HS}:=\Tr(X^*Y),
 \qquad \norm{X}_2^2=\Tr(X^*X).
\]
The vector inner product is conjugate-linear in the first variable.
\rev{For $X\in\cL(E,F)$, the notation $\ran X:=\{Xx:x\in E\}\subseteq F$
denotes the range of $X$.}

\subsection{Vectorization and partial traces}

For $X\in\cL(V,U)$, define
\begin{equation}
\label{eq:vec}
 \operatorname{vec}(X)
 :=\sum_{i,a}X_{ia}\,u_i\otimes v_a.
\end{equation}
This is an isometry from $\cL(V,U)$ with its Hilbert--Schmidt norm onto
$U\otimes V$.  Two identities used repeatedly are
\begin{align}
 \Tr_V\bigl(
 \ketbra{\operatorname{vec}(E)}{\operatorname{vec}(D)}
 \bigr)
 &=ED^*,
 \label{eq:vec-trV}\\
 \Tr_U\bigl(
 \ketbra{\operatorname{vec}(E)}{\operatorname{vec}(D)}
 \bigr)
 &=(D^*E)^T.
 \label{eq:vec-trU}
\end{align}
Both follow by tracing the matrix coefficients over the repeated index.

If $H$ has orthonormal basis $(h_a)$, define the coefficient map
\begin{equation}
\label{eq:JH}
 J_H:\cL(H)\longrightarrow H\otimes H,
 \qquad
 J_H(X)=\sum_{a,b}X_{ab}\,h_a\otimes h_b.
\end{equation}
It is again an isometry.  \rev{For a vector in $H\otimes H$, the notation
$\SR_{H:H}$ denotes its Schmidt rank across the bipartition between the first
and second copies of $H$.}  Moreover,
\begin{equation}
\label{eq:rank-schmidt}
 \SR_{H:H}(J_H(X))=\rank X.
\end{equation}
Indeed, a singular-value decomposition of $X$ becomes a Schmidt decomposition
of $J_H(X)$ after conjugating the right singular vectors in the fixed basis.

We also use the Euclidean Ky Fan two-norm
\[
 \norm{X}_{(2),2}:=\sqrt{s_1(X)^2+s_2(X)^2}.
\]

\begin{lemma}[Rank-two duality]
\label{lem:kyfan-duality}
For every $X\in\cL(H)$,
\begin{equation}
\label{eq:kyfan-duality}
 \norm{X}_{(2),2}
 =\max_{\substack{\rank Z\le2\\ \norm{Z}_2=1}}
 \abs{\ip{Z}{X}_{\HS}}.
\end{equation}
\end{lemma}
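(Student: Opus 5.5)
We prove the two inequalities separately: an upper bound on $\abs{\ip{Z}{X}_{\HS}}$ for every admissible $Z$, and then a specific $Z$ that attains $\norm{X}_{(2),2}$.

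\emph{Upper bound.} Fix $Z$ with $\rank Z\le2$ and $\norm{Z}_2=1$. Write a singular-value decomposition
\[
 Z=\sigma_1\ketbra{x_1}{y_1}+\sigma_2\ketbra{x_2}{y_2},
\]
where $(x_1,x_2)$ and $(y_1,y_2)$ are orthonormal pairs and $\sigma_1^2+\sigma_2^2=1$. Then
\[
 \ip{Z}{X}_{\HS}=\sigma_1\ip{x_1}{Xy_1}+\sigma_2\ip{x_2}{Xy_2}.
\]
Cauchy--Schwarz in $\mathbb{R}^2$ gives
\[
 \abs{\ip{Z}{X}_{\HS}}\le\Bigl(\abs{\ip{x_1}{Xy_1}}^2+\abs{\ip{x_2}{Xy_2}}^2\Bigr)^{1/2}.
\]
Let $P$ and $Q$ be the orthogonal projections onto $\operatorname{span}\{x_1,x_2\}$ and $\operatorname{span}\{y_1,y_2\}$. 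The two numbers $\ip{x_i}{Xy_i}$ are diagonal entries of the $2\times2$ compression $PXQ$ in these bases. The sum of squared moduli of diagonal entries is at most the squared Hilbert--Schmidt norm of the matrix, so
\[
 \abs{\ip{x_1}{Xy_1}}^2+\abs{\ip{x_2}{Xy_2}}^2\le\norm{PXQ}_2^2 .
\]
Since $PXQ$ has rank at most two, $\norm{PXQ}_2^2=s_1(PXQ)^2+s_2(PXQ)^2$. Multiplying by the contractions $P$ and $Q$ does not increase any singular value, so $s_j(PXQ)\le s_j(X)$. Together these give $\abs{\ip{Z}{X}_{\HS}}\le\norm{X}_{(2),2}$.

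\emph{Attainment.} Take an SVD $X=\sum_j s_j\ketbra{u_j}{v_j}$ and assume $X\ne0$ (if $X=0$ both sides vanish). Set
\[
 Z=\frac{s_1\ketbra{u_1}{v_1}+s_2\ketbra{u_2}{v_2}}{\norm{X}_{(2),2}} .
\]
This $Z$ has rank at most two and unit Hilbert--Schmidt norm. By orthonormality of the singular vectors,
\[
 \ip{Z}{X}_{\HS}=\frac{s_1^2+s_2^2}{\norm{X}_{(2),2}}=\norm{X}_{(2),2}.
\]
If $\dim H=1$, the same construction with only the $s_1$ term works. Hence the maximum equals $\norm{X}_{(2),2}$ and is attained.

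The only nonroutine step is the interlacing bound $s_j(PXQ)\le s_j(X)$. It follows from the minimax characterization of singular values, or more directly from $s_j(AXB)\le\norm{A}\,\norm{B}\,s_j(X)$ applied with $\norm{P},\norm{Q}\le1$.
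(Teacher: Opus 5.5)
Your proof is correct. The attainment half is exactly the paper's construction, but the upper bound takes a different route.

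The paper applies von Neumann's trace inequality directly, $\abs{\ip{Z}{X}_{\HS}}\le s_1(Z)s_1(X)+s_2(Z)s_2(X)$, and finishes with Cauchy--Schwarz. You instead apply Cauchy--Schwarz to the SVD of $Z$ first. You then bound the two diagonal entries by the Hilbert--Schmidt norm of the $2\times2$ compression $PXQ$. Finally you use $s_j(PXQ)\le s_j(X)$ for contractions $P,Q$.

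The paper's version is a one-line appeal to a stronger inequality, valid for every $Z$ and not only rank two. Yours replaces von Neumann's inequality with the more elementary minimax fact. As a by-product, it shows that $\norm{X}_{(2),2}$ equals the maximum of $\norm{PXQ}_2$ over pairs of projections of rank at most two.

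One small point: your upper bound presumes orthonormal pairs $x_1,x_2$ and $y_1,y_2$. When $\rank Z=1$ you must complete these arbitrarily, and when $\dim H=1$ you need the one-term version of the argument. Both fixes are trivial.
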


\begin{proof}
Let $X=U\Sigma V^*$ be a singular-value decomposition.  Von Neumann's trace
inequality gives
\[
 \abs{\ip{Z}{X}_{\HS}}
 \le s_1(Z)s_1(X)+s_2(Z)s_2(X)
 \le \norm{Z}_2\norm{X}_{(2),2}
\]
for $\rank Z\le2$.  Equality is attained by taking $Z$ to have the same first
two singular directions as $X$, with singular values proportional to
$s_1(X)$ and $s_2(X)$.
\end{proof}

\subsection{The arbitrary-copy rank-two Werner criterion}
\label{subsec:qk-criterion}

Let $[k]=\{1,\ldots,k\}$ and
\[
 \mathcal H_k=H_1\otimes\cdots\otimes H_k.
\]
For $J\subseteq[k]$, let $\Tr_J$ denote partial trace over the factors indexed
by $J$, with the conventions $\Tr_\varnothing C=C$ and
$\Tr_{[k]}C=\Tr C$.  For $\alpha\in\mathbb R$, define
\begin{equation}
\label{eq:qk-def}
 q_k(\alpha,C)
 :=\sum_{J\subseteq[k]}\alpha^{\abs J}\norm{\Tr_JC}_2^2.
\end{equation}
When all $H_i=\C^d$, Pablo Costa Rico's characterization
\cite[Theorem~1]{CostaRico2025} is the following.

\begin{proposition}[Partial-trace criterion for $k$ copies]
\label{prop:costa-rico-k}
Let $d\ge2$, $k\ge1$, and let $\rho_\alpha$ be the Werner state
\cref{eq:werner-intro}.  It is $k$-copy distillable if and only if there exists
$C\in\cL((\C^d)^{\otimes k})$ with $\rank C\le2$ such that
\begin{equation}
\label{eq:qk-negative}
 q_k(\alpha,C)<0.
\end{equation}
\end{proposition}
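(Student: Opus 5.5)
We derive the criterion from the standard rank-two characterization of $k$-copy distillability. The state $\rho$ is $k$-copy distillable exactly when there is a vector $\psi\in(\C^d)^{\otimes k}\otimes(\C^d)^{\otimes k}$ with Schmidt rank at most two across the $A^k:B^k$ cut such that $\langle\psi,(\rho^\Gamma)^{\otimes k}\psi\rangle<0$ \cite{HorodeckiEtAl1998,DurEtAl2000}. The plan is to compute this expectation explicitly for $\rho_\alpha$ and identify it, up to a positive factor, with $q_k(\alpha,C)$.

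\textbf{Step 1: the single-copy partial transpose.} Up to the positive normalization $(d^2+\alpha d)^{-1}$, we have $\rho_\alpha^\Gamma\propto I+\alpha F^\Gamma$. Moreover $F^\Gamma=\sum_{a,b}\ketbra{aa}{bb}=d\,P_+$, the unnormalized maximally entangled projector. Expanding the tensor power gives
\[
 (\rho_\alpha^\Gamma)^{\otimes k}\propto\sum_{J\subseteq[k]}\alpha^{\abs J}\,(F^\Gamma)_J\otimes I_{[k]\setminus J},
\]
where $(F^\Gamma)_J$ acts on the copy pairs $(A_j,B_j)$ with $j\in J$.

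\textbf{Step 2: vectorization.} Write $\psi=J_H(C)$ with $H=(\C^d)^{\otimes k}$. By \cref{eq:rank-schmidt}, the condition $\SR\le2$ is equivalent to $\rank C\le2$, and $\norm{\psi}^2=\norm{C}_2^2$.

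\textbf{Step 3: each summand is a partial-trace norm.} We show
\[
 \langle J_H(C),(F^\Gamma)_J\otimes I\,J_H(C)\rangle=\norm{\Tr_JC}_2^2.
\]
The operator $F^\Gamma$ on the pair $(A_j,B_j)$ equals $\ketbra{\Omega}{\Omega}$ with $\Omega=\sum_a\ket{aa}$. Contracting $J_H(C)$ against $\bigotimes_{j\in J}\langle\Omega|$ sets $a_j=b_j$ and sums over it, so it produces $J_{H'}(\Tr_JC)$ on the remaining factors, with $H'=\bigotimes_{j\notin J}\C^d$. The squared norm of this vector is $\norm{\Tr_JC}_2^2$ by the isometry property. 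The index bookkeeping needs care, but it is the same repeated-index computation as in \cref{eq:vec-trV}.

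\textbf{Step 4: summation.} Summing over $J$ gives $\langle\psi,(\rho_\alpha^\Gamma)^{\otimes k}\psi\rangle=c\,q_k(\alpha,C)$ with $c>0$. Since the map $C\mapsto J_H(C)$ is a bijection between operators of rank at most two and vectors of Schmidt rank at most two, both directions of the equivalence follow.

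The main obstacle is justifying the distillability criterion itself, namely that $k$-copy distillability is equivalent to negativity of $(\rho^\Gamma)^{\otimes k}$ on some Schmidt-rank-two vector. We take this from the cited literature, as the paper does; it is also \cite[Theorem~1]{CostaRico2025} directly. The remaining care is confirming that the conventions for partial transpose, with $F^\Gamma=dP_+$ rather than its negative, give the sign pattern $\alpha^{\abs J}$.
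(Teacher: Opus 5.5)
Your argument is correct. The paper gives no separate proof of this proposition: it imports the statement as \cite[Theorem~1]{CostaRico2025}, resting on the standard Schmidt-rank-two criterion quoted in the introduction. Your Steps~1--4 reproduce the computation the paper itself uses elsewhere, namely the identity $\ip{J_{\mathcal H_k}(C)}{\bigl(\bigotimes_i(I+\alpha\ketbra{\Omega_i}{\Omega_i})\bigr)J_{\mathcal H_k}(C)}=q_k(\alpha,C)$ in the proof of \cref{lem:rank-one-all-k}, together with the fact that $I+\alpha\ketbra{\Omega}{\Omega}$ is a positive multiple of $\rho_\alpha^\Gamma$ as in \cref{eq:Phi-Choi}. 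So your route is essentially the paper's, made self-contained, and you correctly handle the sign $F^\Gamma=+\ketbra{\Omega}{\Omega}$ and the positivity of $d^2+\alpha d$.
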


For $k=2$, \cref{eq:qk-def} becomes
\begin{equation}
\label{eq:qalpha}
 q_2(\alpha,C)
 =\norm{C}_2^2
 +\alpha\bigl(\norm{\Tr_1C}_2^2+\norm{\Tr_2C}_2^2\bigr)
 +\alpha^2\abs{\Tr C}^2.
\end{equation}
At $\alpha=-1/2$, nonnegativity of \cref{eq:qalpha} is exactly the
partial-trace inequality \cref{eq:intro-main}.

We shall repeatedly use the elementary rank bound
\begin{equation}
\label{eq:trace-rank}
 \abs{\Tr C}^2\le \rank(C)\norm{C}_2^2.
\end{equation}
It follows from $\abs{\Tr C}\le\norm{C}_1$ and
$\norm{C}_1\le\sqrt{\rank C}\norm{C}_2$.

\section{Exact arbitrary-copy reformulations}
\label{sec:k-formulations}

The higher-copy problem can be written as tensor-power $2$-positivity or as a
hierarchy of operator inequalities with a qubit ancilla.  Both formulations
are exact; neither is an approximation.

\subsection{Tensor powers of the endpoint map}

For $\alpha\in\mathbb R$, define
\begin{equation}
\label{eq:Phi-alpha}
 \Phi_{\alpha,d}:M_d(\C)\longrightarrow M_d(\C),
 \qquad
 \Phi_{\alpha,d}(X)=\Tr(X)I_d+\alpha X.
\end{equation}
With the unnormalized maximally entangled vector
$\Omega_d=\sum_i e_i\otimes e_i$, the Choi matrix of
$\Phi_{\alpha,d}$ is
\begin{equation}
\label{eq:Phi-Choi}
 J(\Phi_{\alpha,d})=I_{d^2}+\alpha\ketbra{\Omega_d}{\Omega_d},
\end{equation}
which is a positive scalar multiple of $\rho_\alpha^\Gamma$.

\begin{theorem}[Map formulation]
\label{thm:map-formulation}
For fixed integers $d\ge2$ and $k\ge1$, and for
$\alpha\in[-1,1]$, the following are equivalent:
\begin{enumerate}
\item $\rho_\alpha$ is $k$-copy undistillable;
\item $\Phi_{\alpha,d}^{\otimes k}$ is $2$-positive;
\item $q_k(\alpha,C)\ge0$ for every
$C\in\cL((\C^d)^{\otimes k})$ with $\rank C\le2$.
\end{enumerate}
In particular, the unrestricted endpoint conjecture is equivalent to
$2$-positivity of every tensor power of
\begin{equation}
\label{eq:Phi-endpoint}
 \Phi_d:=\Phi_{-1/2,d},
 \qquad \Phi_d(X)=\Tr(X)I_d-\frac12X.
\end{equation}
\end{theorem}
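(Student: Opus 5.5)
The plan is to prove (1)$\Leftrightarrow$(3) directly from \cref{prop:costa-rico-k}, and then to prove (2)$\Leftrightarrow$(3) by an explicit Choi-matrix computation.

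For (1)$\Leftrightarrow$(3), \cref{prop:costa-rico-k} says that $\rho_\alpha$ is $k$-copy distillable exactly when some $C$ with $\rank C\le2$ has $q_k(\alpha,C)<0$. Negating both sides gives the equivalence of (1) and (3), so nothing further is needed there.

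For (2)$\Leftrightarrow$(3), I will rewrite $2$-positivity in terms of Schmidt-rank-two vectors. The Choi matrix of a tensor product is the tensor product of the Choi matrices after reordering factors. Hence, by \cref{eq:Phi-Choi},
\[
 J(\Phi_{\alpha,d}^{\otimes k})\cong\bigotimes_{i=1}^k\bigl(I+\alpha\ketbra{\Omega_d}{\Omega_d}\bigr)_{A_iB_i}
 =\sum_{J\subseteq[k]}\alpha^{\abs J}\,P_J,
\]
where $P_J$ is the projector-like operator equal to $\ketbra{\Omega}{\Omega}$ on the pairs $A_iB_i$ with $i\in J$ and the identity elsewhere. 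A linear map $\Psi$ on $M_n$ is $2$-positive exactly when $\langle v,J(\Psi)v\rangle\ge0$ for every $v\in\C^n\otimes\C^n$ of Schmidt rank at most two. This is the standard correspondence: $(\id_2\otimes\Psi)$ applied to a positive element of $M_2\otimes M_n$ is tested against vectors, and conjugating the maximally entangled vector by $Z\otimes I$ with $\rank Z\le2$ produces exactly the Schmidt-rank-two vectors.

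Next I write $v=J_H(X)$ with $H=(\C^d)^{\otimes k}$, so that $\rank X=\SR(v)\le2$ by \cref{eq:rank-schmidt}. I then compute each term $\langle v,P_Jv\rangle$. Contracting $\Omega$ on the $i$-th pair $A_iB_i$ of the coefficient tensor of $X$ is the same as taking the partial trace over factor $i$. Therefore $\langle v,P_Jv\rangle=\norm{\Tr_J X}_2^2$, and summing over $J$ gives $\langle v,J(\Phi_{\alpha,d}^{\otimes k})v\rangle=q_k(\alpha,X)$. This is the step that needs care: I must keep track of the ordering $A_1\cdots A_kB_1\cdots B_k$ against $(A_1B_1)\cdots(A_kB_k)$ and of the conjugations in $J_H$. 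I would verify it for $k=1$ first, where $\langle J(X),\Omega\rangle=\Tr X$, and then note that everything factorizes across copies for product $X$ and extends by sesquilinearity. With this identity in hand, (2)$\Leftrightarrow$(3) follows immediately.

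The final assertion about the endpoint is the specialization $\alpha=-1/2$ together with quantification over all $k$. The main obstacle I expect is the bookkeeping in the identity $\langle v,P_Jv\rangle=\norm{\Tr_JX}_2^2$; everything else is standard.
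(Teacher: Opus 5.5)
Your proposal is correct, but it links the three conditions in a different order from the paper.

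\textbf{The paper's route.} The paper proves (1)$\Leftrightarrow$(2) directly. Since $J(\Phi_{\alpha,d})$ is a positive multiple of $\rho_\alpha^\Gamma$, and Choi matrices tensorize up to a regrouping that preserves the input--output Schmidt rank, $2$-positivity of $\Phi_{\alpha,d}^{\otimes k}$ is literally nonnegativity of $(\rho_\alpha^\Gamma)^{\otimes k}$ on the Schmidt-rank-two test vectors of the distillability criterion. It then quotes (1)$\Leftrightarrow$(3) from \cref{prop:costa-rico-k}.

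\textbf{Your route.} You connect (2) and (3) through the explicit identity $\ip{J_H(X)}{J(\Phi_{\alpha,d}^{\otimes k})J_H(X)}=q_k(\alpha,X)$. This is exactly the expansion the paper uses only later, in the proof of \cref{lem:rank-one-all-k}.

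\textbf{What each buys.} The paper's argument is shorter, and it keeps the physical equivalence (1)$\Leftrightarrow$(2) free of any partial-trace computation. Your argument makes (2)$\Leftrightarrow$(3) a purely algebraic statement that does not use the distillability criterion. It holds verbatim for every real $\alpha$ and for unequal local dimensions, and Costa Rico's theorem is then your only external input.

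\textbf{The bookkeeping step.} It is lighter than you expect. Write $P_J=\ketbra{\Omega_J}{\Omega_J}\otimes I$. Contracting the coefficient tensor gives $(\langle\Omega_J\rvert\otimes I)J_H(X)=J_{H_{J^c}}(\Tr_JX)$ directly. Hence $\ip{J_H(X)}{P_JJ_H(X)}=\norm{\Tr_JX}_2^2$, with no ordering or conjugation issues beyond the canonical regrouping.

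If you instead argue through product operators, you must compare the polarized forms $\ip{J_H(Y)}{P_JJ_H(X)}$ and $\ip{\Tr_JY}{\Tr_JX}_{\HS}$ on pairs of product operators. Both factorize across copies, so this works. Checking only the quadratic forms on product $X$ would not suffice, since two Hermitian forms can agree on a spanning set without being equal.
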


\begin{proof}
A linear map is $2$-positive if and only if its Choi matrix has nonnegative
expectation on every vector of Schmidt rank at most two.  Choi matrices
multiply under tensor products after the canonical unitary regrouping from
interleaved input--output factors to all output factors followed by all input
factors.  This regrouping preserves Schmidt rank across the relevant
output--input bipartition.  Thus \cref{eq:Phi-Choi} identifies condition (2)
with block positivity of $(\rho_\alpha^\Gamma)^{\otimes k}$ on precisely the
Schmidt-rank-two test vectors in the distillability criterion.  This gives
(1)$\Leftrightarrow$(2).  The equivalence with (3) is
\cref{prop:costa-rico-k}.
\end{proof}

The phrase ``tensor-stable $2$-positivity'' is useful here: the all-copy
endpoint asks whether every $\Phi_d^{\otimes k}$ is $2$-positive.  This is
stronger than positivity of each tensor power and belongs to the general
circle of tensor-stability questions for positive maps
\cite{MullerHermesReebWolf2016,AubrunMullerHermes2021}.  There is no general
principle asserting that the tensor product of two $2$-positive maps is again
$2$-positive.

\subsection{A qubit-marginal hierarchy}

Fix $k\ge1$.  Let $U_1,\ldots,U_k$ be finite-dimensional Hilbert spaces,
let $Q=\C^2$, and
let $\psi\in U_1\otimes\cdots\otimes U_k\otimes Q$ be a unit vector with
\begin{equation}
\label{eq:k-maxmix}
 \rho_Q=\Tr_{U_1\cdots U_k}\ketbra{\psi}{\psi}=\frac12I_Q.
\end{equation}
Write $\rho=\ketbra{\psi}{\psi}$ and
$\rho_{JQ}=\Tr_{J^c}\rho$.  Identity factors on $J^c$ will be written
explicitly only when needed.  Every expression
$\rho_{JQ}\otimes I_{J^c}$ below denotes the canonical subsystem-labelled
embedding into $U_1\otimes\cdots\otimes U_k\otimes Q$, irrespective of the
displayed tensor order.  Define
\begin{equation}
\label{eq:Hk-def}
 H_k(\psi)
 :=I+\sum_{\varnothing\ne J\subseteq[k]}
 2\left(-\frac12\right)^{\abs J}
 \rho_{JQ}\otimes I_{J^c}.
\end{equation}
Equivalently,
\begin{equation}
\label{eq:Hk-map}
 H_k(\psi)
 =2\bigl(\Phi_{U_1}\otimes\cdots\otimes\Phi_{U_k}
 \otimes\id_Q\bigr)(\rho),
\end{equation}
where $\Phi_U(X)=\Tr(X)I_U-X/2$ and \cref{eq:k-maxmix} supplies the identity
term.

We first record the contraction identity behind the equivalence.

\begin{lemma}[Marginal contraction]
\label{lem:marginal-contraction}
Let $e_1,e_2$ be orthonormal in
$U_1\otimes\cdots\otimes U_k$, let $d_1,d_2$ be arbitrary, and set
\begin{equation}
\label{eq:k-C-psi-delta}
 C=\sum_{r=1}^2\ketbra{e_r}{d_r},
 \qquad
 \psi=\frac1{\sqrt2}\sum_{r=1}^2e_r\otimes q_r,
 \qquad
 \delta=\sum_{r=1}^2d_r\otimes q_r.
\end{equation}
Then, for every $J\subseteq[k]$,
\begin{equation}
\label{eq:marginal-contraction}
 \ip{\delta}{(\rho_{JQ}\otimes I_{J^c})\delta}
 =\frac12\norm{\Tr_JC}_2^2.
\end{equation}
\end{lemma}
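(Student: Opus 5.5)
Set $\mathcal U=U_1\otimes\cdots\otimes U_k$ and $(q_1,q_2)$ an orthonormal basis of $Q$. The plan is a direct computation: expand both sides of \cref{eq:marginal-contraction} in the rank-one pieces $\ketbra{e_r}{d_r}$ and match them term by term. Split $\mathcal U=\mathcal U_J\otimes\mathcal U_{J^c}$. Since $\rho=\frac12\sum_{r,s}\ketbra{e_r}{e_s}\otimes\ketbra{q_r}{q_s}$, tracing out $J^c$ gives
\[
 \rho_{JQ}=\frac12\sum_{r,s}\Tr_{J^c}\bigl(\ketbra{e_r}{e_s}\bigr)\otimes\ketbra{q_r}{q_s}.
\]
Here $\Tr_{J^c}\ketbra{e_r}{e_s}$ is an operator on $\mathcal U_J$; it is the $J$-marginal of the coherence $\ketbra{e_r}{e_s}$.

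Next, evaluate the left side. Pairing $\delta=\sum_t d_t\otimes q_t$ against $\rho_{JQ}\otimes I_{J^c}$ contracts the qubit indices, which gives
\[
 \ip{\delta}{(\rho_{JQ}\otimes I_{J^c})\delta}
 =\frac12\sum_{r,s}\Tr\Bigl[\bigl(\Tr_{J^c}\ketbra{e_r}{e_s}\otimes I_{J^c}\bigr)\ketbra{d_s}{d_r}\Bigr].
\]
For the right side, use $\Tr_JC=\sum_r\Tr_J\ketbra{e_r}{d_r}$. Then
\[
 \norm{\Tr_JC}_2^2=\sum_{r,s}\Tr\bigl[(\Tr_J\ketbra{e_s}{d_s})^*\,\Tr_J\ketbra{e_r}{d_r}\bigr].
\]
So it suffices to prove, for each pair $(r,s)$, the identity
\[
 \Tr\bigl[(\Tr_J\ketbra{e_s}{d_s})^*\Tr_J\ketbra{e_r}{d_r}\bigr]
 =\Tr\bigl[(\Tr_{J^c}\ketbra{e_r}{e_s}\otimes I)\ketbra{d_s}{d_r}\bigr],
\]
possibly with $r$ and $s$ swapped. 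Swapping is harmless because the sum runs over all $(r,s)$.

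This four-vector identity is the main step, and it is the only place where care is needed. I would check it in coordinates. Write $x=e_r$, $y=d_r$, $z=e_s$, $w=d_s$, and index $\mathcal U_J$ by $i$ and $\mathcal U_{J^c}$ by $a$. Then $(\Tr_J\ketbra{x}{y})_{ab}=\sum_i x_{ia}\bar y_{ib}$. Hence
\[
 \Tr\bigl[(\Tr_J\ketbra{z}{w})^*\Tr_J\ketbra{x}{y}\bigr]
 =\sum_{a,b,i,j}\bar z_{ja}w_{jb}\,x_{ia}\bar y_{ib}.
\]
On the other side, $(\Tr_{J^c}\ketbra{x}{z})_{ij}=\sum_a x_{ia}\bar z_{ja}$. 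Pairing with $\ketbra{w}{y}\otimes$ identity on $J^c$ gives $\sum_{i,j,a,b}x_{ia}\bar z_{ja}\,\bar y_{ib}w_{jb}$, which is the same sum.

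Equivalently, both sides are the full contraction $\langle y\otimes z,\,(I_J\otimes F)(x\otimes w)\rangle$-type, where $F$ is the swap of the $J^c$ factors. This invariant viewpoint is the cleanest way to present the match. One should double-check the conjugation conventions against \cref{eq:vec-trV,eq:vec-trU}.

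Summing over $(r,s)$ yields \cref{eq:marginal-contraction}. The factor $\frac12$ comes directly from the normalization of $\psi$. The argument does not use orthonormality of $e_1,e_2$; that hypothesis only ensures $\rho_Q=\frac12 I_Q$, as in \cref{eq:k-maxmix}. The edge cases $J=\varnothing$ and $J=[k]$ are covered by the same computation with trivial factors: for $J=\varnothing$ the left side reduces to $\frac12\norm{C}_2^2$, and for $J=[k]$ it reduces to $\frac12\abs{\Tr C}^2$. I would note these as a sanity check.
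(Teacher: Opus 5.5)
Your proposal is correct and follows essentially the same route as the paper: you expand $\rho_{JQ}$ into the coherences $\Tr_{J^c}\ketbra{e_r}{e_s}\otimes\ketbra{q_r}{q_s}$, contract the qubit indices, and reduce to the four-vector identity \cref{eq:four-vector-contraction}, which both you and the paper verify by a coefficient expansion in product bases. Your coordinate check already matches the terms with the same ordered pair $(r,s)$, so the hedge about swapping $r$ and $s$ is unnecessary; your remark that orthonormality of $e_1,e_2$ is not used in the identity is accurate.
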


\begin{proof}
For vectors $e,d,e',d'$ in $U_J\otimes U_{J^c}$, a coefficient expansion in
product bases gives
\begin{equation}
\label{eq:four-vector-contraction}
 \ip{d}{\bigl(\Tr_{J^c}\ketbra{e}{e'}\otimes I_{J^c}\bigr)d'}
 =\ip{\Tr_J\ketbra{e'}{d'}}
        {\Tr_J\ketbra{e}{d}}_{\HS}.
\end{equation}
Set $X_r=\Tr_J\ketbra{e_r}{d_r}$.  Expanding $\rho_{JQ}$ from
\cref{eq:k-C-psi-delta} and applying
\cref{eq:four-vector-contraction} gives
\[
 \ip{\delta}{(\rho_{JQ}\otimes I_{J^c})\delta}
 =\frac12\sum_{r,s}\ip{X_s}{X_r}_{\HS}
 =\frac12\norm{\sum_rX_r}_2^2
 =\frac12\norm{\Tr_JC}_2^2.
\]
The cases $J=\varnothing$ and $J=[k]$ respectively give
$\norm{C}_2^2/2$ and $\abs{\Tr C}^2/2$.
\end{proof}

\begin{theorem}[Qubit-marginal formulation]
\label{thm:Hk-equivalence}
For fixed $U_1,\ldots,U_k$, the following are equivalent:
\begin{enumerate}
\item $q_k(-1/2,C)\ge0$ for every
$C\in\cL(U_1\otimes\cdots\otimes U_k)$ with $\rank C\le2$;
\item $H_k(\psi)\succeq0$ for every unit vector $\psi$ satisfying
\cref{eq:k-maxmix}.
\end{enumerate}
More precisely, for $C,\psi,\delta$ as in
\cref{eq:k-C-psi-delta},
\begin{equation}
\label{eq:Hk-expectation}
 \ip{\delta}{H_k(\psi)\delta}
 =q_k\!\left(-\frac12,C\right).
\end{equation}
\end{theorem}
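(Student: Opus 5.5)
The plan is to prove the identity \cref{eq:Hk-expectation} first and then read off the equivalence of (1) and (2) from it.

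\emph{The identity.} Fix $C,\psi,\delta$ as in \cref{eq:k-C-psi-delta}. By linearity of \cref{eq:Hk-def} we have
\[
 \ip{\delta}{H_k(\psi)\delta}
 =\norm{\delta}_2^2+\sum_{\varnothing\ne J}2\left(-\tfrac12\right)^{\abs J}\ip{\delta}{(\rho_{JQ}\otimes I_{J^c})\delta}.
\]
\cref{lem:marginal-contraction} turns each summand into $\bigl(-\tfrac12\bigr)^{\abs J}\norm{\Tr_JC}_2^2$. For the remaining term, orthonormality of $q_1,q_2$ gives $\norm{\delta}^2=\norm{d_1}^2+\norm{d_2}^2$. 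Orthonormality of $e_1,e_2$ gives $\norm{C}_2^2=\sum_{r,s}\ip{e_r}{e_s}\ip{d_s}{d_r}=\norm{d_1}^2+\norm{d_2}^2$. So the $J=\varnothing$ term of $q_k(-1/2,C)$ is exactly $\norm{\delta}^2$, and \cref{eq:Hk-expectation} follows.

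\emph{Direction (1)$\Rightarrow$(2).} Let $\psi$ satisfy \cref{eq:k-maxmix}. Fix an orthonormal basis $q_1,q_2$ of $Q$ and write $\psi=\sum_r f_r\otimes q_r$. The condition $\rho_Q=\tfrac12I$ means $\ip{f_s}{f_r}=\tfrac12\delta_{rs}$. Hence $e_r=\sqrt2 f_r$ are orthonormal and $\psi$ has exactly the form in \cref{eq:k-C-psi-delta}. Any vector in $U_1\otimes\cdots\otimes U_k\otimes Q$ can be written uniquely as $\delta=\sum_r d_r\otimes q_r$ with $d_r$ arbitrary. Setting $C=\sum_r\ketbra{e_r}{d_r}$, which has rank at most two, the identity gives $\ip{\delta}{H_k(\psi)\delta}=q_k(-1/2,C)\ge0$, so $H_k(\psi)\succeq0$.

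\emph{Direction (2)$\Rightarrow$(1).} Let $\rank C\le2$. Its range lies in a two-dimensional subspace, unless $\dim(U_1\otimes\cdots\otimes U_k)=1$, which is a trivial case handled by $k$-fold scalars. Choose orthonormal $e_1,e_2$ spanning a subspace containing $\ran C$. Then $C=\sum_r\ketbra{e_r}{d_r}$ with $d_r=C^*e_r$. Build $\psi$ and $\delta$ from these. The vector $\psi$ is a unit vector with maximally mixed qubit marginal, so (2) together with the identity gives $q_k(-1/2,C)\ge0$.

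\emph{Main obstacle.} The only delicate point is the degenerate ambient dimension one, where no two orthonormal vectors exist. There the admissible $\psi$ set is empty and (2) holds vacuously. Condition (1) reads $\sum_J(-1/2)^{\abs J}\abs{c}^2=(1/2)^k\abs{c}^2\ge0$, so it holds as well, and the equivalence survives. Otherwise the argument is bookkeeping, with the substance contained in \cref{lem:marginal-contraction}.
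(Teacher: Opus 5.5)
Your proposal is correct and follows essentially the same route as the paper. You prove \cref{eq:Hk-expectation} via \cref{lem:marginal-contraction}, then get (2)$\Rightarrow$(1) by completing $\ran C$ to an orthonormal pair $e_1,e_2$ with $d_r=C^*e_r$, get (1)$\Rightarrow$(2) by writing an admissible $\psi$ in the form \cref{eq:k-C-psi-delta} and expanding an arbitrary $\delta$ as $d_1\otimes q_1+d_2\otimes q_2$, and dispose of the ambient-dimension-one case separately. You also spell out two steps the paper leaves implicit: the Gram-matrix argument showing that $\rho_Q=\tfrac12 I$ forces $\sqrt2 f_1,\sqrt2 f_2$ to be orthonormal, and the check that $\norm{\delta}^2=\norm{C}_2^2$.
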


\begin{proof}
Let $D=\dim(U_1\otimes\cdots\otimes U_k)$.  If $D=1$, there is no
unit vector satisfying \cref{eq:k-maxmix}, while for scalar $C$ one has
\[
 q_k(-1/2,C)=2^{-k}\abs{C}^2\ge0.
\]
Thus both conditions hold.  Assume henceforth that $D\ge2$.

\rev{For $C,\psi,\delta$ as in \cref{eq:k-C-psi-delta},} insert
\cref{eq:marginal-contraction} into \cref{eq:Hk-def}.  The identity term
contributes $\norm{C}_2^2$, and every nonempty $J$ contributes
$(-1/2)^{\abs J}\norm{\Tr_JC}_2^2$, proving
\cref{eq:Hk-expectation}.  Given an arbitrary rank-at-most-two $C$, choose
orthonormal $e_1,e_2$ whose span contains $\ran C$ and set
$d_r=C^*e_r$.  Then \cref{eq:k-C-psi-delta} applies, so (2) implies (1).

Conversely, every $\psi$ satisfying \cref{eq:k-maxmix} has a Schmidt
representation of the form in \cref{eq:k-C-psi-delta} with orthonormal
$e_1,e_2$.  Every vector in the full space has the form
$\delta=d_1\otimes q_1+d_2\otimes q_2$.  The corresponding $C$ has rank at
most two, so (1) and \cref{eq:Hk-expectation} show that every quadratic form
of $H_k(\psi)$ is nonnegative.
\end{proof}

For orientation, the first three operators are
\begin{align}
 H_1(\psi)&=I-\rho,
 \label{eq:H1}\\
 H_2(\psi)&=I-\rho_{1Q}\otimes I_2-I_1\otimes\rho_{2Q}
 +\frac12\rho,
 \label{eq:H2}\\
 H_3(\psi)&=I-\sum_{i=1}^3\rho_{iQ}\otimes I_{i^c}
 +\frac12\sum_{1\le i<j\le3}\rho_{ijQ}\otimes I_{\{i,j\}^c}
 -\frac14\rho.
 \label{eq:H3}
\end{align}
The positive-operator theorem proved in \cref{sec:qubit} is precisely
$H_2(\psi)\succeq0$.  By \cref{thm:Hk-equivalence}, universal positivity of
\cref{eq:H3} would solve the three-copy endpoint.

\subsection{Why the two-copy proof does not induct}

For $S\subseteq[k]$, let
\[
 q_S(\alpha,C)=\sum_{J\subseteq S}\alpha^{\abs J}
 \norm{\Tr_JC}_2^2.
\]
Separating subsets according to whether they contain the last copy gives the
formal recursion
\begin{equation}
\label{eq:q-recursion}
 q_{[k]}(\alpha,C)
 =q_{[k-1]}(\alpha,C)
 +\alpha q_{[k-1]}(\alpha,\Tr_kC).
\end{equation}
At $\alpha=-1/2$, the second term has the unfavorable sign.  More
importantly, partial trace does not preserve the rank-two class: already a
rank-one maximally entangled projection can have a full-rank partial trace.
The first term in \cref{eq:q-recursion} also retains the last system as a
spectator.  Hence an induction that merely invokes the $(k-1)$-copy theorem
would apply that theorem outside its hypotheses.  In the map formulation,
the same obstruction is the absence of a general tensor-product closure law
for $2$-positive maps.

\section{Double-antisymmetric geometry}
\label{sec:double-skew}

For a Hilbert space $E$, let
\[
 \Alt(E):=\{L\in\cL(E):L^T=-L\}.
\]
Set $H=U\otimes V$.  After regrouping
\[
 H\otimes H
 =(U\otimes V)\otimes(U\otimes V)
 \cong(U\otimes U)\otimes(V\otimes V),
\]
let $F_U$ swap the two $U$ factors and let $F_V$ swap the two $V$ factors.
\rev{We denote the orthogonal projection onto the subspace that is
antisymmetric in both pairs by $\Paa{}$, where ``aa'' stands for
antisymmetric--antisymmetric:}
\begin{equation}
\label{eq:Paa}
 \Paa{}:=\frac14(I-F_U)(I-F_V).
\end{equation}
Its range is naturally identified with
$\bigwedge^2U\otimes\bigwedge^2V$.

\subsection{A rank-two projection estimate}

\begin{lemma}[Symmetric rank-two projection bound]
\label{lem:projection}
Let $Z\in\cL(U\otimes V)$ satisfy $Z^T=Z$ and $\rank Z\le2$.  Then
\begin{equation}
\label{eq:projection-bound}
 \norm{\Paa{}J_H(Z)}^2\le\frac12\norm{Z}_2^2.
\end{equation}
\end{lemma}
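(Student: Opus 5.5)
The approach is to put $Z$ in Takagi normal form, so that $J_H(Z)$ becomes a combination of two ``diagonal'' vectors $w_r\otimes w_r$. The bound then reduces to a $2\times2$ Gram-matrix estimate, which follows from Cauchy--Schwarz.

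\emph{Takagi reduction.} Since $Z$ is complex symmetric, the Autonne--Takagi factorization gives $Z=s_1\,w_1w_1^T+s_2\,w_2w_2^T$. Here $s_1,s_2\ge0$, $w_1,w_2\in H=U\otimes V$ are orthonormal, and $\norm{Z}_2^2=s_1^2+s_2^2$. From the definition \cref{eq:JH}, $J_H(ww^T)=w\otimes w$, so
\[
 J_H(Z)=s_1\,w_1\otimes w_1+s_2\,w_2\otimes w_2 .
\]
Set $G_{rt}=\ip{w_r\otimes w_r}{\Paa{}\,w_t\otimes w_t}$. Then $\norm{\Paa{}J_H(Z)}^2=\sum_{r,t}s_rs_tG_{rt}$. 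It therefore suffices to show that the $2\times2$ positive semidefinite matrix $G$ satisfies $G\preceq\frac12 I$.

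\emph{Computing $G$.} Identify each $w_r$ with a matrix $W_r\in\cL(V,U)$ via \cref{eq:vec}. Expand $\Paa{}=\frac14(I-F_U-F_V+F_UF_V)$ and note that $F_UF_V$ is the full swap of $H\otimes H$. This gives:
\begin{itemize}
\item $\ip{w_r\otimes w_r}{w_t\otimes w_t}=\ip{w_r}{w_t}^2=\delta_{rt}$, and the full-swap term equals the same quantity;
\item a coefficient computation, grouping the two $U$-indices separately, gives $\ip{w_r\otimes w_r}{F_U\,w_t\otimes w_t}=\Tr\bigl((W_r^*W_t)^2\bigr)=:c_{rt}$;
\item the $F_V$ term gives the same value $c_{rt}$. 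The reason is that $(W_r^*W_t)$ and $(W_tW_r^*)$ have equal trace-of-square, and $F_V=F_UF_V\cdot F_U$ with the full swap acting trivially on symmetric products.
\end{itemize}
Hence $G_{rr}=\tfrac12(1-p_r)$, where $p_r=\Tr\bigl((W_r^*W_r)^2\bigr)$ is the purity of the $V$-marginal of $w_r$. For $r\ne t$ we get $G_{12}=-\tfrac12c_{12}$.

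\emph{Conclusion.} The matrix $\tfrac12I-G$ has diagonal entries $\tfrac12p_r\ge0$ and off-diagonal entry $\tfrac12c_{12}$. So $G\preceq\frac12I$ is equivalent to $\abs{c_{12}}^2\le p_1p_2$. Write $N=W_1^*W_2$. Then
\[
 \abs{\Tr N^2}\le\norm{N}_2^2=\ip{W_1W_1^*}{W_2W_2^*}_{\HS}\le\norm{W_1W_1^*}_2\norm{W_2W_2^*}_2=\sqrt{p_1p_2},
\]
by Cauchy--Schwarz twice. This yields \cref{eq:projection-bound}.

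\emph{Main obstacle.} The delicate step is the index bookkeeping in the second item: identifying the $F_U$ and $F_V$ overlaps exactly as $\Tr\bigl((W_r^*W_t)^2\bigr)$ with the correct conjugations, and checking that the cross terms from the identity and full swap vanish. That vanishing is where orthonormality of the Takagi vectors is used.
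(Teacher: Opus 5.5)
Your proposal is correct and follows essentially the same route as the paper. You Takagi-reduce $Z$ to $s_1\,w_1\otimes w_1+s_2\,w_2\otimes w_2$, identify the $F_U$ and $F_V$ overlaps as $\Tr\bigl((W_r^*W_t)^2\bigr)$ using that the full swap fixes symmetric products, and conclude from $\bigl|\Tr\bigl((W_1^*W_2)^2\bigr)\bigr|^2\le p_1p_2$ via the same two Cauchy--Schwarz steps. The only difference is packaging: the paper states this as $\ip{z}{F_Uz}\ge0$ together with $F_Uz=F_Vz$, whereas your bound $G\preceq\frac12 I$ is the same $2\times2$ positivity statement.
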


\begin{proof}
If $\dim H=1$, then $\Paa{}=0$.  Assume $\dim H\ge2$.  By Takagi
factorization \cite[Section~4.4]{HornJohnson2013},
\begin{equation}
\label{eq:takagi-Z}
 Z=a\,xx^T+b\,yy^T,
\end{equation}
where $a,b\ge0$ and $x,y\in H$ are orthonormal.  If $\rank Z=1$, take
$b=0$ and choose any unit $y\perp x$.

Write $x=\operatorname{vec}(X)$ and $y=\operatorname{vec}(Y)$ with
$X,Y\in\cL(V,U)$, and set
\begin{equation}
\label{eq:pxt}
 p_X=\Tr((X^*X)^2),
 \qquad
 p_Y=\Tr((Y^*Y)^2),
 \qquad
 t=\Tr((X^*Y)^2).
\end{equation}
Since $J_H(xx^T)=x\otimes x$, we have
\[
 z:=J_H(Z)=a\,x\otimes x+b\,y\otimes y.
\]
A direct coefficient computation gives
\begin{align}
 \ip{x\otimes x}{F_U(x\otimes x)}&=p_X,
 \label{eq:swap-xx}\\
 \ip{y\otimes y}{F_U(y\otimes y)}&=p_Y,
 \label{eq:swap-yy}\\
 \ip{x\otimes x}{F_U(y\otimes y)}&=t.
 \label{eq:swap-xy}
\end{align}
Hence
\begin{equation}
\label{eq:zFU}
 \ip{z}{F_Uz}=a^2p_X+b^2p_Y+2ab\operatorname{Re}t.
\end{equation}

We show that the right side is nonnegative.  Put $M=X^*Y$.  First,
\begin{align}
 \abs{\Tr(M^2)}
 &=\abs{\sum_{i,j}M_{ij}M_{ji}}
 \le\sum_{i,j}\abs{M_{ij}}\abs{M_{ji}}
 \le\norm{M}_2^2.
 \label{eq:trace-square}
\end{align}
Second, Hilbert--Schmidt Cauchy--Schwarz for the positive operators
$XX^*$ and $YY^*$ gives
\begin{align}
 \norm{X^*Y}_2^2
 &=\Tr(XX^*YY^*)
 \le\sqrt{\Tr((XX^*)^2)\Tr((YY^*)^2)}
 =\sqrt{p_Xp_Y}.
 \label{eq:XY-bound}
\end{align}
Therefore $\abs{t}^2\le p_Xp_Y$, so
\[
 \begin{pmatrix}p_X&t\\ \overline t&p_Y\end{pmatrix}\succeq0.
\]
\rev{Here $\overline t$ denotes the complex conjugate of the scalar $t$.}
Applying this matrix to the real vector $(a,b)^T$ and using
\cref{eq:zFU} yields
\begin{equation}
\label{eq:FU-positive}
 \ip{z}{F_Uz}\ge0.
\end{equation}

Every term $x\otimes x$ is invariant under the simultaneous exchange of the
two copies of $H$, so $F_UF_Vz=z$.  Since the swaps commute, this implies
$F_Uz=F_Vz$.  Using \cref{eq:Paa,eq:FU-positive},
\begin{align*}
 \norm{\Paa{}z}^2
 &=\ip{z}{\Paa{}z}\\
 &=\frac14\bigl(
 2\norm{z}^2-\ip{z}{F_Uz}-\ip{z}{F_Vz}
 \bigr)\\
 &=\frac12\bigl(\norm{z}^2-\ip{z}{F_Uz}\bigr)
 \le\frac12\norm{z}^2.
\end{align*}
The coefficient map $J_H$ is an isometry, which proves
\cref{eq:projection-bound}.
\end{proof}

\subsection{The sharp double-skew Ky Fan inequality}

\begin{theorem}[Double-skew Ky Fan bound]
\label{thm:double-skew}
If
\[
 K\in\Alt(U)\otimes\Alt(V)\subseteq\cL(U\otimes V),
\]
then
\begin{equation}
\label{eq:double-skew}
 s_1(K)^2+s_2(K)^2\le\frac12\norm{K}_2^2.
\end{equation}
Consequently, for every orthonormal pair $f_1,f_2\in U\otimes V$,
\begin{equation}
\label{eq:double-skew-action}
 \norm{Kf_1}^2+\norm{Kf_2}^2\le\frac12\norm{K}_2^2.
\end{equation}
If $\dim U,\dim V\ge2$, the constant $1/2$ is sharp.
\end{theorem}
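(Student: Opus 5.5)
The plan is to combine the rank-two duality of \cref{lem:kyfan-duality} with the projection estimate of \cref{lem:projection}. The point is that the maximizing rank-two test operator can be chosen complex symmetric, and then only its $\Paa{}$-component sees $K$.

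\emph{Step 1: $K$ is complex symmetric and $J_H(K)$ lies in the range of $\Paa{}$.}
\begin{itemize}
\item By linearity it suffices to treat $K=A\otimes B$ with $A^T=-A$ and $B^T=-B$.
\item Then $K^T=A^T\otimes B^T=K$, with transpose taken in the product basis.
\item Under the regrouping $H\otimes H\cong(U\otimes U)\otimes(V\otimes V)$, the vector $J_H(A\otimes B)$ corresponds to $J_U(A)\otimes J_V(B)$.
\item $J_U(A)$ is antisymmetric under the swap of the two $U$ factors, because $A^T=-A$. Likewise $J_V(B)$ is antisymmetric under the $V$ swap.
\item Hence $F_UJ_H(K)=-J_H(K)$ and $F_VJ_H(K)=-J_H(K)$, so $\Paa{}J_H(K)=J_H(K)$.
\end{itemize}

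\emph{Step 2: choose a symmetric test operator.}
Since $K^T=K$, Takagi factorization gives $K=\sum_i\sigma_i\,w_iw_i^T$ with orthonormal $w_i$ and $\sigma_i=s_i(K)$. Set
\[
Z_0=\sigma_1w_1w_1^T+\sigma_2w_2w_2^T .
\]
This operator satisfies $Z_0^T=Z_0$, $\rank Z_0\le2$ and $\norm{Z_0}_2^2=\sigma_1^2+\sigma_2^2$. A direct computation, using $w_i^*K\overline{w_j}=\sigma_i\delta_{ij}$, gives
\[
\ip{Z_0}{K}_{\HS}=\sigma_1^2+\sigma_2^2 .
\]

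\emph{Step 3: estimate the pairing.}
Use that $J_H$ is an isometry, then Step 1 and self-adjointness of $\Paa{}$, then Cauchy--Schwarz and \cref{lem:projection}:
\[
\sigma_1^2+\sigma_2^2=\ip{J_H(Z_0)}{\Paa{}J_H(K)}=\ip{\Paa{}J_H(Z_0)}{J_H(K)}\le\tfrac1{\sqrt2}\,\norm{Z_0}_2\,\norm{K}_2 .
\]
Dividing by $\norm{Z_0}_2=\sqrt{\sigma_1^2+\sigma_2^2}$ (the case $K=0$ being trivial) and squaring gives \cref{eq:double-skew}. This step is the main obstacle. A general maximizer in \cref{lem:kyfan-duality} need not be symmetric, and symmetrizing it could double its rank. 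That is why Takagi factorization of $K$ is the key move: it produces a symmetric maximizer directly, which is exactly what \cref{lem:projection} requires.

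\emph{Step 4: the action bound \cref{eq:double-skew-action}.*}
Let $P$ be the projection onto $\operatorname{span}\{f_1,f_2\}$. Then
\[
\norm{Kf_1}^2+\norm{Kf_2}^2=\Tr(PK^*KP)\le\lambda_1(K^*K)+\lambda_2(K^*K)=s_1(K)^2+s_2(K)^2
\]
by Ky Fan's maximum principle, and \cref{eq:double-skew} finishes the argument.

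\emph{Step 5: sharpness.*}
Take $K=J\otimes J$ with $J=\begin{pmatrix}0&1\\-1&0\end{pmatrix}$ on $\C^2\otimes\C^2$. For larger $\dim U,\dim V$, embed $\C^2$ into each factor and extend $J$ by zero, so that $J$ is still antisymmetric. Since $J$ is unitary on $\C^2$, $K$ is unitary on $\C^2\otimes\C^2$ and zero on the complement. Thus all four nonzero singular values equal $1$, giving $s_1^2+s_2^2=2$ and $\norm{K}_2^2=4$. Equality therefore holds in \cref{eq:double-skew}.
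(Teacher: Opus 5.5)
Your proposal is correct and follows the paper's proof essentially step for step: complex symmetry and $P_{\mathrm{aa}}$-invariance of $J_H(K)$, a symmetric rank-two test operator built from the Takagi factorization of $K$, Cauchy--Schwarz combined with the symmetric rank-two projection bound, the Ky Fan variational principle for the action estimate, and the $L\otimes L$ sharpness example. The only difference is cosmetic: you keep $Z_0$ unnormalized and divide by $\norm{Z_0}_2$ at the end, whereas the paper normalizes $Z$ first.
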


\begin{proof}
For elementary tensors $A\otimes B$ with $A^T=-A$ and $B^T=-B$,
\[
 (A\otimes B)^T=A\otimes B.
\]
Thus every $K\in\Alt(U)\otimes\Alt(V)$ is complex symmetric.  Its coefficient
tensor changes sign under each of $F_U$ and $F_V$, so
\begin{equation}
\label{eq:K-range}
 K^T=K,
 \qquad
 \Paa{}J_H(K)=J_H(K).
\end{equation}
Take a Takagi factorization
\[
 K=\sum_j s_j(K)u_ju_j^T
\]
with orthonormal $u_j$.  Let
\[
 s=\sqrt{s_1(K)^2+s_2(K)^2}.
\]
If $s=0$, there is nothing to prove.  Otherwise set
\[
 Z=\frac{s_1(K)u_1u_1^T+s_2(K)u_2u_2^T}{s}.
\]
Then $Z^T=Z$, $\rank Z\le2$, and $\norm{Z}_2=1$.  Orthogonality of the
Takagi terms gives
\[
 s=\ip{Z}{K}_{\HS}
 =\ip{\Paa{}J_H(Z)}{J_H(K)}.
\]
By \cref{lem:projection} and Cauchy--Schwarz,
\[
 s\le\norm{\Paa{}J_H(Z)}\norm{K}_2
 \le\frac1{\sqrt2}\norm{K}_2,
\]
which proves \cref{eq:double-skew}.  The variational principle for the two
largest eigenvalues of $K^*K$ gives
\[
 \norm{Kf_1}^2+\norm{Kf_2}^2
 \le s_1(K)^2+s_2(K)^2,
\]
and hence \cref{eq:double-skew-action}.

For sharpness, choose the spans of two vectors from each of the fixed
orthonormal bases of $U$ and $V$, and let
\[
 L=\begin{pmatrix}0&1\\-1&0\end{pmatrix}
\]
act on each.  Extended by zero on the orthogonal complements,
$K=L\otimes L$ has four nonzero singular values, all equal to one.  Thus
$s_1(K)^2+s_2(K)^2=2=\frac12\norm{K}_2^2$.
\end{proof}

The preceding theorem has an equivalent Schmidt-rank formulation.  For a
positive operator $Q$ on $H\otimes H$, define
\begin{equation}
\label{eq:S2-def}
 \norm{Q}_{S(2)}
 :=\sup_{\substack{\norm{\psi}=1\\ \SR_{H:H}(\psi)\le2}}
 \ip{\psi}{Q\psi}.
\end{equation}

\begin{corollary}[Sharp antisymmetric-projection norm]
\label{cor:S2}
Assume $\dim U,\dim V\ge2$.  For $\Paa{}$ in \cref{eq:Paa},
\begin{equation}
\label{eq:S2-half}
 \norm{\Paa{}}_{S(2)}=\frac12.
\end{equation}
\end{corollary}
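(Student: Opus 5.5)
To prove \cref{cor:S2}, I will show that $\ip{\psi}{\Paa{}\psi}=\norm{\Paa{}\psi}^2\le\frac12$ for every unit $\psi\in H\otimes H$ with $\SR_{H:H}(\psi)\le2$, and then exhibit a vector attaining the value $\frac12$.

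For the upper bound, I use the isometry $J_H$ and \cref{eq:rank-schmidt} to write $\psi=J_H(Z)$ with $\rank Z\le2$ and $\norm{Z}_2=1$. The range of $\Paa{}$ consists exactly of the vectors $J_H(K)$ with $K\in\Alt(U)\otimes\Alt(V)$: a coefficient tensor antisymmetric in the two $U$ slots and in the two $V$ slots is precisely an element of $\Alt(U)\otimes\Alt(V)$, as already used in \cref{eq:K-range}. Hence
\[
\norm{\Paa{}\psi}=\sup_{K}\abs{\ip{J_H(K)}{J_H(Z)}}=\sup_K\abs{\ip{K}{Z}_{\HS}},
\]
where the supremum runs over $K\in\Alt(U)\otimes\Alt(V)$ with $\norm{K}_2=1$. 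For each such $K$, \cref{lem:kyfan-duality} gives $\abs{\ip{Z}{K}_{\HS}}\le\norm{K}_{(2),2}$, and \cref{thm:double-skew} gives $\norm{K}_{(2),2}\le 1/\sqrt2$. Taking the supremum over $K$ yields $\norm{\Paa{}\psi}^2\le\frac12$.

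An alternative route avoids duality: decompose $Z$ into its symmetric and antisymmetric parts and apply \cref{lem:projection}. This fails because the symmetric part of $Z$ may have rank up to four, so the duality argument is the one I will use. The only point needing care is the identification of $\ran\Paa{}$ under $J_H$, and this is a routine index check.

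For sharpness, I take the extremal $K=L\otimes L$ from the proof of \cref{thm:double-skew}, normalized so that $\norm{K}_2=1$; it has four equal singular values $\frac12$. I then let $Z$ be the rank-two truncation of $K$ along its first two singular directions, normalized. This gives $\ip{Z}{K}_{\HS}=\norm{K}_{(2),2}=\frac{1}{\sqrt2}$, and since $J_H(K)$ is a unit vector in $\ran\Paa{}$, it follows that $\norm{\Paa{}J_H(Z)}^2\ge\frac12$. Explicitly, in a $2\times2$ block one can take $\psi$ to be the normalized sum of two orthogonal antisymmetric–antisymmetric product components.
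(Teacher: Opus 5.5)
Your proposal is correct and is essentially the paper's proof. Both arguments identify $\ran\Paa{}$ with $J_H(\Alt(U)\otimes\Alt(V))$, write $\ip{\psi}{\Paa{}\psi}$ as a supremum of $\abs{\ip{Z}{K}_{\HS}}^2$ over unit $K$ in that space, bound it by $s_1(K)^2+s_2(K)^2\le\frac12$ via \cref{lem:kyfan-duality} and \cref{thm:double-skew}, and get equality from the $L\otimes L$ example. The one step to write out is the inclusion $\ran\Paa{}\subseteq J_H(\Alt(U)\otimes\Alt(V))$, which \cref{eq:K-range} does not supply; it holds because $F_U$ and $F_V$ act on $J_H(X)$ as the partial transposes of $X$ on $U$ and on $V$, and their joint $(-1,-1)$ eigenspace is $\Alt(U)\otimes\Alt(V)$.
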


\begin{proof}
Because $\Paa{}$ is an orthogonal projection,
\[
 \ip{\psi}{\Paa{}\psi}
 =\max_{\substack{\varphi\in\ran \Paa{}\\ \norm{\varphi}=1}}
 \abs{\ip{\psi}{\varphi}}^2.
\]
Taking the supremum first over Schmidt-rank-two unit vectors and then over
unit $\varphi\in\ran \Paa{}$ gives
\begin{align*}
 \norm{\Paa{}}_{S(2)}
 &=\sup_{\substack{K\in\Alt(U)\otimes\Alt(V)\\\norm{K}_2=1}}
 \sup_{\substack{\rank Z\le2\\\norm{Z}_2=1}}
 \abs{\ip{Z}{K}_{\HS}}^2\\
 &=\sup_{\substack{K\in\Alt(U)\otimes\Alt(V)\\\norm{K}_2=1}}
 \bigl(s_1(K)^2+s_2(K)^2\bigr),
\end{align*}
where we used \cref{eq:rank-schmidt,lem:kyfan-duality}.  The upper bound is
\cref{thm:double-skew}, and the sharpness example in that theorem gives the
matching lower bound.
\end{proof}

\begin{remark}
\Cref{cor:S2} is the common geometric constant that also appears in the
concurrent proof of Fu, Gao, and Park \cite{FuGaoPark2026}.  In the present
argument it is used through the dual action estimate
\cref{eq:double-skew-action}.
\end{remark}

\section{A positive operator with a maximally mixed qubit marginal}
\label{sec:qubit}

The double-skew estimate controls the negative eigenvector of a partially
transposed Schmidt-rank-two pure state.  The result is the following
tripartite operator inequality.

\subsection{Reduction and co-reduction maps}

For a finite-dimensional Hilbert space $E$, define
\begin{equation}
\label{eq:reduction-maps}
 \cR_E(X)=\Tr(X)I_E-X,
 \qquad
 \Lambda_E(X)=\Tr(X)I_E-X^T.
\end{equation}
Let $(e_i)$ be the fixed basis and, for $i<j$, set
\begin{equation}
\label{eq:skew-kraus}
 L_{ij}=\ketbra{e_i}{e_j}-\ketbra{e_j}{e_i}.
\end{equation}
A direct calculation gives
\begin{equation}
\label{eq:lambda-kraus}
 \Lambda_E(X)=\sum_{i<j}L_{ij}XL_{ij}^*.
\end{equation}
Thus $\Lambda_E$ is completely positive and
\begin{equation}
\label{eq:R-Lambda}
 \cR_E=\Lambda_E\circ T.
\end{equation}

\begin{proposition}[Qubit-marginal operator inequality]
\label{prop:qubit}
Let $Q=\C^2$, and let $\psi\in U\otimes V\otimes Q$ be a unit vector such
that
\begin{equation}
\label{eq:qubit-maxmix}
 \rho_Q:=\Tr_{UV}\ketbra{\psi}{\psi}=\frac12I_Q.
\end{equation}
Write
\[
 \rho=\ketbra{\psi}{\psi},
 \qquad
 \rho_{UQ}=\Tr_V\rho,
 \qquad
 \rho_{VQ}=\Tr_U\rho.
\]
Then
\begin{equation}
\label{eq:H-positive}
 H_\psi
 :=I_{UVQ}+\frac12\rho
 -\rho_{UQ}\otimes I_V-I_U\otimes\rho_{VQ}
 \succeq0,
\end{equation}
where identity factors are placed in their natural tensor positions.
\end{proposition}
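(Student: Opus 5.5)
The plan is to rewrite $H_\psi$ through the reduction maps, so that the only non-obviously-positive contribution comes from the negative part of a partially transposed Schmidt-rank-two state, and then to control that contribution with \cref{thm:double-skew}.

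\textbf{Step 1: rewrite $H_\psi$.} By \cref{eq:Hk-map} with $k=2$, $H_\psi=2(\Phi_U\otimes\Phi_V\otimes\id_Q)(\rho)$. Expanding $\Phi_U\otimes\Phi_V$ and $\cR_U\otimes\cR_V$ termwise gives the identity
\[
 \Phi_U\otimes\Phi_V=\tfrac12\,\cR_U\otimes\cR_V+\tfrac12\Tr(\cdot)I_{UV}-\tfrac14\id .
\]
Using $\Tr_{UV}\rho=\frac12I_Q$ from \cref{eq:qubit-maxmix}, this yields
\begin{equation*}
 H_\psi=(\cR_U\otimes\cR_V\otimes\id_Q)(\rho)+\tfrac12 I_{UVQ}-\tfrac12\rho .
\end{equation*}
Since $\frac12(I-\rho)\succeq0$, the problem reduces to bounding the first term from below on the range where it may be negative.

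\textbf{Step 2: move the transposes onto the state.} By \cref{eq:R-Lambda}, $\cR_U\otimes\cR_V=(\Lambda_U\otimes\Lambda_V)\circ(T_U\otimes T_V)$, so the first term equals $(\Lambda_U\otimes\Lambda_V\otimes\id_Q)(\rho^{T_{UV}})$. Write $\psi=\frac1{\sqrt2}(e_1\otimes q_1+e_2\otimes q_2)$ as in the proof of \cref{thm:Hk-equivalence}. Then $\rho^{T_{UV}}$ is, up to a global transpose, the partial transpose of a rank-two maximally entangled state across $UV:Q$. Its spectrum is $\{\frac12,\frac12,\frac12,-\frac12\}$ on the span of $\bar e_r\otimes q_s$, with a single negative eigenvector $\varphi=(\bar e_1\otimes q_2-\bar e_2\otimes q_1)/\sqrt2$.

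By the Kraus form \cref{eq:lambda-kraus}, $\Lambda_U\otimes\Lambda_V$ is completely positive. The positive spectral part therefore contributes a positive operator, and what remains is to show
\[
 \tfrac12(\Lambda_U\otimes\Lambda_V\otimes\id_Q)(\ketbra{\varphi}{\varphi})\preceq \tfrac12(I-\rho)+(\text{positive part contribution}).
\]

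\textbf{Step 3: test against an arbitrary vector.} Fix $\delta=d_1\otimes q_1+d_2\otimes q_2$ and compute
\[
 \sum_{i<j,\,k<l}\abs{\ip{\delta}{(L_{ij}\otimes L_{kl}\otimes I_Q)\varphi}}^2 .
\]
The operators $L_{ij}\otimes L_{kl}$ form an orthogonal basis of $\Alt(U)\otimes\Alt(V)$, each with Hilbert--Schmidt norm squared $4$. Hence this sum is a multiple of $\norm{P(W)}_2^2$, where $W=\sum_r\ketbra{\bar e_{\sigma(r)}}{d_r}$ (with the appropriate sign and index swap) has rank at most two and $P$ is the Hilbert--Schmidt projection onto $\Alt(U)\otimes\Alt(V)$. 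Setting $K=P(W)$, we have $\norm{K}_2^2=\abs{\ip{K}{W}}$. The dual action estimate \cref{eq:double-skew-action}, equivalently \cref{cor:S2} via \cref{lem:kyfan-duality}, then gives
\[
 \norm{K}_2^2\le\tfrac1{\sqrt2}\norm{K}_2\,\norm{W}_2,
 \qquad\text{so}\qquad
 \norm{K}_2^2\le\tfrac12\norm{W}_2^2=\tfrac12\norm{\delta}^2 .
\]
The resulting bound on the negative term is of order $\frac14\norm{\delta}^2$.

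\textbf{Main obstacle.} Step 3 alone does not dominate $\frac12\norm{\delta}^2-\frac12\abs{\ip{\psi}{\delta}}^2$ when $\delta$ is nearly parallel to $\psi$. In that regime one must also use the positive contribution $\frac12(\Lambda\otimes\Lambda\otimes\id)$ of the symmetric eigenvectors, or refine the projection estimate by splitting $\delta$ into its $\psi$-component and orthogonal part. A cross-term in the $\psi$-direction then has to be absorbed by Cauchy--Schwarz. I expect this bookkeeping, keeping the sharp constant $\frac12$ from \cref{thm:double-skew} while tracking the $\psi$-component, to be the hardest step. The first thing I would try is to verify directly that $\ip{\psi}{(\Lambda\otimes\Lambda\otimes\id)(\ketbra{\varphi}{\varphi})\psi}=0$, and then bound the mixed terms.
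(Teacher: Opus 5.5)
\textbf{There is a genuine gap: the proposal stops at the decisive step.} Your Steps~1 and~2 reproduce the paper's decomposition
$H_\psi=\tfrac12\Psi(P_+)+\tfrac12\bigl(I_{UVQ}-\rho-\Psi(\ketbra{\varphi}{\varphi})\bigr)$. Here $\Psi:=\Lambda_U\otimes\Lambda_V\otimes\id_Q$ and $P_+$ is the projection onto the positive eigenspace of $\rho^{T_{UV}}$. Step~3 is a correct dual form of the paper's contraction estimate: it shows $\Psi(\ketbra{\varphi}{\varphi})\preceq I_{UVQ}$. But that is not enough to finish.

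Your constant also hides how tight the situation is. Since $\ip{\delta}{(K\otimes I_Q)\varphi}=\tfrac1{\sqrt2}\ip{W}{K}_{\HS}$ and each $L_\alpha\otimes M_\beta$ has squared Hilbert--Schmidt norm $4$, you get $\ip{\delta}{\Psi(\ketbra{\varphi}{\varphi})\delta}=2\norm{P(W)}_2^2\le\norm{\delta}^2$. So the negative term is bounded by $\tfrac12\norm{\delta}^2$, not $\tfrac14\norm{\delta}^2$. This bound is attained: for $U=V=\C^2$ the operator $L_{12}\otimes L_{12}$ is unitary, and $\delta=(L_{12}\otimes L_{12}\otimes I_Q)\varphi$ gives equality.

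Consequently the contraction bound uses up the entire $\tfrac12 I_{UVQ}$ budget. Together with $\Psi(P_+)\succeq0$ it only yields $H_\psi\succeq-\tfrac12\rho$. No slack remains to absorb nonzero cross terms in the $\psi$-direction by Cauchy--Schwarz, and the bookkeeping you sketch (splitting $\delta$, borrowing from $\Psi(P_+)$) is never carried out.

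\textbf{The missing idea is that $\Psi(\ketbra{\varphi}{\varphi})$ annihilates $\psi$ exactly.} The paper gets this from complex symmetry. Every $K=L_\alpha\otimes M_\beta$ satisfies $K^T=L_\alpha^T\otimes M_\beta^T=K$, so
\[
 \ip{\psi}{(K\otimes I_Q)\varphi}
 =\tfrac12\bigl(\overline e_2^{\,T}K\overline e_1-\overline e_1^{\,T}K\overline e_2\bigr)=0 .
\]
The antisymmetry of $\varphi$ in the Schmidt labels of $\psi$ cancels against $K^T=K$. Set $\mathcal Tc=\sum_{\alpha,\beta}c_{\alpha\beta}(L_\alpha\otimes M_\beta\otimes I_Q)\varphi$. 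Then $\Psi(\ketbra{\varphi}{\varphi})=\mathcal T\mathcal T^*$ and $\ran\mathcal T\subseteq\psi^\perp$, so $\Psi(\ketbra{\varphi}{\varphi})\psi=0$ and there are no mixed terms at all. A positive contraction $A$ with $A\psi=0$ satisfies $A=(I_{UVQ}-\rho)A(I_{UVQ}-\rho)\preceq I_{UVQ}-\rho$. That is exactly the inequality you are missing, and $\Psi(P_+)$ is needed only for its positivity.

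The check you list as your first attempt, $\ip{\psi}{\Psi(\ketbra{\varphi}{\varphi})\psi}=0$, is the right one; by positivity it already forces $\Psi(\ketbra{\varphi}{\varphi})\psi=0$. However, you neither perform it nor identify the symmetry $K^T=K$ of $\Alt(U)\otimes\Alt(V)$ that makes it true, and you wrongly expect residual cross terms to remain.
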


\begin{proof}
Define the completely positive map
\[
 \Phi=\Lambda_U\otimes\Lambda_V\otimes\id_Q.
\]
Using \cref{eq:R-Lambda} and expanding the two reduction maps,
\begin{align}
 \Phi(\rho^{T_{UV}})
 &=(\cR_U\otimes\cR_V\otimes\id_Q)(\rho)\notag\\
 &=\frac12I_{UVQ}
 -\rho_{UQ}\otimes I_V-I_U\otimes\rho_{VQ}+\rho,
 \label{eq:Phi-expanded}
\end{align}
where \cref{eq:qubit-maxmix} was used in the first term.  Hence
\begin{equation}
\label{eq:H-first-decomp}
 H_\psi=\Phi(\rho^{T_{UV}})+\frac12(I_{UVQ}-\rho).
\end{equation}

The maximally mixed qubit marginal implies the Schmidt form
\begin{equation}
\label{eq:psi-schmidt}
 \psi=\frac1{\sqrt2}(e_1\otimes q_1+e_2\otimes q_2),
\end{equation}
where $e_1,e_2\in U\otimes V$ and $q_1,q_2\in Q$ are orthonormal.  Define
\begin{equation}
\label{eq:eta}
 \eta=\frac1{\sqrt2}
 (\overline e_1\otimes q_2-\overline e_2\otimes q_1),
 \qquad
 P_-=\ketbra{\eta}{\eta}.
\end{equation}
Let $P_+$ be the orthogonal projection onto the span of
\[
 \overline e_1\otimes q_1,
 \quad
 \overline e_2\otimes q_2,
 \quad
 \frac1{\sqrt2}(\overline e_1\otimes q_2+
 \overline e_2\otimes q_1).
\]
On this four-dimensional Schmidt support, partial transpose on $UV$ has three
positive eigenvalues $1/2$ and one negative eigenvalue $-1/2$.  Therefore
\begin{equation}
\label{eq:pt-spectrum}
 \rho^{T_{UV}}=\frac12P_+-\frac12P_-.
\end{equation}
Substituting into \cref{eq:H-first-decomp},
\begin{equation}
\label{eq:H-second-decomp}
 H_\psi
 =\frac12\Phi(P_+)
 +\frac12\bigl(I_{UVQ}-\rho-\Phi(P_-)\bigr).
\end{equation}
The first summand is positive.  It remains to prove
\begin{equation}
\label{eq:negative-control}
 \Phi(P_-)\preceq I_{UVQ}-\rho.
\end{equation}

Let $(L_\alpha)$ and $(M_\beta)$ be the skew Kraus operators
\cref{eq:skew-kraus} for $U$ and $V$.  On the coefficient Hilbert space with
standard basis indexed by $(\alpha,\beta)$, define
\begin{equation}
\label{eq:T-map}
 \mathcal T c
 =\sum_{\alpha,\beta}c_{\alpha\beta}
 (L_\alpha\otimes M_\beta\otimes I_Q)\eta.
\end{equation}
Then
\begin{equation}
\label{eq:TTstar}
 \Phi(P_-)=\mathcal T\mathcal T^*.
\end{equation}
For a coefficient array $c$, set
\begin{equation}
\label{eq:Kc}
 K_c=\sum_{\alpha,\beta}
 c_{\alpha\beta}L_\alpha\otimes M_\beta.
\end{equation}
Using \cref{eq:eta},
\begin{equation}
\label{eq:Tc-norm}
 \norm{\mathcal T c}^2
 =\frac12\bigl(
 \norm{K_c\overline e_1}^2+\norm{K_c\overline e_2}^2
 \bigr).
\end{equation}
The skew matrix units are mutually Hilbert--Schmidt orthogonal and each has
squared norm $2$.  Hence
\begin{equation}
\label{eq:Kc-norm}
 \norm{K_c}_2^2=4\norm{c}_2^2.
\end{equation}
Since $K_c\in\Alt(U)\otimes\Alt(V)$, \cref{eq:double-skew-action} applied to
$\overline e_1,\overline e_2$ gives
\[
 \norm{\mathcal T c}^2
 \le\frac14\norm{K_c}_2^2
 =\norm{c}_2^2.
\]
Thus $\norm{\mathcal T}\le1$, and
\begin{equation}
\label{eq:Phi-contraction}
 0\preceq\Phi(P_-)\preceq I_{UVQ}.
\end{equation}

It remains to locate the support.  Every $K_c$ is complex symmetric because
it is a sum of tensor products of two skew-symmetric matrices.  From
\cref{eq:psi-schmidt,eq:eta},
\begin{align}
 \ip{\psi}{\mathcal T c}
 &=\frac12\bigl(
 \ip{e_2}{K_c\overline e_1}
 -\ip{e_1}{K_c\overline e_2}
 \bigr)\notag\\
 &=\frac12\bigl(
 \overline e_2^{\,T}K_c\overline e_1
 -\overline e_1^{\,T}K_c\overline e_2
 \bigr)=0.
 \label{eq:T-range}
\end{align}
Therefore $\ran\mathcal T\subseteq\psi^\perp$.  A positive contraction
supported on $\psi^\perp$ is bounded above by the projection onto
$\psi^\perp$, and hence
\[
 \mathcal T\mathcal T^*\preceq I_{UVQ}-\rho.
\]
This proves \cref{eq:negative-control}; then \cref{eq:H-second-decomp} yields
$H_\psi\succeq0$.
\end{proof}

\section{The sharp rank-two partial-trace inequality}
\label{sec:partial-trace}

We now turn the tripartite positivity statement into the desired matrix
inequality.

\begin{theorem}
\label{thm:main}
Let $C\in\cL(U\otimes V)$ satisfy $\rank C\le2$.  Then
\begin{equation}
\label{eq:main}
 \norm{\Tr_U C}_2^2+\norm{\Tr_V C}_2^2
 \le 2\norm{C}_2^2+\frac12\abs{\Tr C}^2.
\end{equation}
If $\dim U,\dim V\ge2$, the constants are jointly sharp.
\end{theorem}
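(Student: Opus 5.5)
The inequality will be obtained as a direct specialization of \cref{prop:qubit} via the contraction identity of \cref{thm:Hk-equivalence} with $k=2$, $U_1=U$, $U_2=V$. Given $C$ with $\rank C\le2$, assume first $\dim(U\otimes V)\ge2$. Choose orthonormal $e_1,e_2\in U\otimes V$ whose span contains $\ran C$, and put $d_r=C^*e_r$. Then $C=\sum_r\ketbra{e_r}{d_r}$. Form
\[
\psi=\tfrac1{\sqrt2}\sum_r e_r\otimes q_r,\qquad \delta=\sum_r d_r\otimes q_r .
\]
The vector $\psi$ satisfies \cref{eq:qubit-maxmix}, and the operator $H_\psi$ of \cref{eq:H-positive} coincides with $H_2(\psi)$ in \cref{eq:H2}. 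By \cref{eq:Hk-expectation},
\[
0\le\ip{\delta}{H_\psi\delta}=q_2(-\tfrac12,C)
=\norm{C}_2^2-\tfrac12\bigl(\norm{\Tr_UC}_2^2+\norm{\Tr_VC}_2^2\bigr)+\tfrac14\abs{\Tr C}^2 .
\]
Multiplying by $2$ gives \cref{eq:main}. If $\dim(U\otimes V)=1$, both sides are scalars and the inequality reads $2\abs c^2\le\tfrac52\abs c^2$, which is trivial. So the bound itself requires no further work beyond assembling earlier results; the only care needed is matching the sign convention $2(-1/2)^{\abs J}$ in \cref{eq:Hk-def} with the coefficients in $H_\psi$.

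The real content is the \emph{joint sharpness} claim. I read it as saying that neither constant can be decreased: for any $(a,b)$ with $a<2$ or $b<\tfrac12$ (and the other at most its stated value), some rank-two $C$ violates $\norm{\Tr_UC}_2^2+\norm{\Tr_VC}_2^2\le a\norm C_2^2+b\abs{\Tr C}^2$. It therefore suffices to exhibit two equality cases that are independent in the $(\norm C_2^2,\abs{\Tr C}^2)$ plane. Equivalently, one could exhibit a single equality case with $\Tr C=0$ together with one with $\Tr C\ne0$.

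\textbf{Traceless candidate.} I would look for the extremizer suggested by \cref{thm:double-skew}: vectors $e_r$ aligned with the top singular directions of $K=L\otimes L$ on $\C^2\otimes\C^2$. The first guess is $U=V=\C^2$ and $C=\ketbra{e_1}{d_1}+\ketbra{e_2}{d_2}$ built from Bell-type vectors. As a simpler check, take $C=\ketbra{\Omega}{\Omega}$ with $\Omega$ the unnormalized maximally entangled vector on $\C^2\otimes\C^2$. Then $\Tr_UC=\Tr_VC=I_2$, so the left side is $4$, while $\norm C_2^2=4$ and $\abs{\Tr C}^2=4$. The right side is $8+2=10$, so this is not tight. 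For the equality case I instead expect $C=F_2$ restricted to rank two, or an antisymmetric-type operator, e.g.\ $C=\ketbra{e_1}{\overline e_2}-\ketbra{e_2}{\overline e_1}$ with $e_r$ product or Bell states, chosen so that the negative-eigenvector step \cref{eq:negative-control} is tight.

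\textbf{Second candidate and main obstacle.} For the $\Tr C\neq0$ equality case I would try $C=\ketbra{\Phi^+}{\Phi^+}\pm\ketbra{\Phi^-}{\Phi^-}$ (normalized Bell states) and similar rank-two combinations, computing the four quantities directly. Finding these explicit equality witnesses is the main obstacle. I would locate them by requiring equality in each inequality used in the chain, namely \cref{lem:projection} with $\ip z{F_Uz}=0$, \cref{eq:double-skew-action}, and $\ran\mathcal T$ filling $\psi^\perp$ on $\delta$. Any leftover checks would be finished by direct $4\times4$ computation, padding by zero to pass to general $\dim U,\dim V\ge2$.
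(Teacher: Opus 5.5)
Your derivation of the inequality itself follows the paper's proof step for step. It uses the same $e_r$ and $d_r=C^*e_r$, the identification $H_2(\psi)=H_\psi$, the contraction identity \cref{eq:Hk-expectation}, and the same scalar case, so that half is complete.

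The gap is in the sharpness claim, which is part of the statement. You correctly reduce it to exhibiting one equality case with $\Tr C=0$ and one with $\Tr C\neq0$. However, you never exhibit either, and the candidates you propose are not extremal. On $\C^2\otimes\C^2$:
\begin{itemize}
\item $\ketbra{\Phi^+}{\Phi^+}+\ketbra{\Phi^-}{\Phi^-}=\ketbra{00}{00}+\ketbra{11}{11}$ has both partial traces equal to $I_2$, so the two sides are $4<6$.
\item $\ketbra{\Phi^+}{\Phi^+}-\ketbra{\Phi^-}{\Phi^-}=\ketbra{00}{11}+\ketbra{11}{00}$ and the skew-type $\ketbra{00}{11}-\ketbra{11}{00}$ both have vanishing partial traces.
\item $\ketbra{00}{01}-\ketbra{01}{00}$ gives $2<4$.
\end{itemize}
Your heuristic of saturating \cref{eq:double-skew-action} does not single out Bell-type vectors either. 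For $K=L\otimes L$, every orthonormal pair in its four-dimensional support attains equality, so that step constrains nothing.

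The missing idea is that the extremizers are product operators. Take orthonormal $u_0,u_1\in U$, set $P=\ketbra{u_0}{u_0}+\ketbra{u_1}{u_1}$, pick unit $v,w\in V$, and let $C=P\otimes\ketbra{v}{w}$ with $\tau=\ip{w}{v}$. Then $\rank C=2$, $\norm{C}_2^2=2$, $\Tr_UC=2\ketbra{v}{w}$, $\Tr_VC=\tau P$, and $\abs{\Tr C}^2=4\abs{\tau}^2$. Hence both sides of \cref{eq:main} equal $4+2\abs{\tau}^2$ for every attainable $\tau$.

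Choosing $v\perp w$ gives the traceless equality case, so the coefficient $2$ cannot be lowered even on traceless $C$. Any $\tau\neq0$ then shows that $1/2$ cannot be lowered once $2$ is fixed. This is exactly the paper's sharpness argument.

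In your own construction this corresponds to $e_r=u_{r-1}\otimes v$, which are product vectors. The associated $\psi$ is $v$ tensored with a maximally entangled vector of $U\otimes Q$, which explains why a Bell-type ansatz on $U\otimes V$ misses it. The example works directly whenever $\dim U,\dim V\ge2$, with no $4\times4$ computation or zero-padding needed.
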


\begin{proof}
If $\dim(U\otimes V)=1$, then $C$ is scalar, and the two sides of
\cref{eq:main} are $2\abs{C}^2$ and $\frac52\abs{C}^2$, respectively.
Assume $\dim(U\otimes V)\ge2$.  Choose orthonormal
$e_1,e_2\in U\otimes V$ whose span contains $\ran C$ and set
$d_r=C^*e_r$.  Then
$C=\sum_{r=1}^2\ketbra{e_r}{d_r}$ because the orthogonal projection onto
$\operatorname{span}\{e_1,e_2\}$ fixes $\ran C$.  Form $\psi,\delta$ as in
\cref{eq:k-C-psi-delta} with $k=2$, $U_1=U$, and $U_2=V$.  By
\cref{eq:Hk-expectation},
\[
 q_2\!\left(-\frac12,C\right)
 =\ip{\delta}{H_2(\psi)\delta}.
\]
The qubit marginal of $\psi$ is $I_Q/2$, and the operator $H_2(\psi)$ in
\cref{eq:H2} is exactly $H_\psi$ in \cref{eq:H-positive}.  Hence
\cref{prop:qubit} gives $q_2(-1/2,C)\ge0$.  Expanding this inequality with
\cref{eq:qalpha} gives
\[
 0\le\norm{C}_2^2
 -\frac12\bigl(\norm{\Tr_UC}_2^2+\norm{\Tr_VC}_2^2\bigr)
 +\frac14\abs{\Tr C}^2,
\]
which rearranges to \cref{eq:main}.

For sharpness, assume $\dim U,\dim V\ge2$.  Choose orthonormal
$u_0,u_1\in U$ and set
\[
 P=\ketbra{u_0}{u_0}+\ketbra{u_1}{u_1}.
\]
For unit $v,w\in V$, let
\begin{equation}
\label{eq:sharp-C}
 C=P\otimes\ketbra{v}{w},
 \qquad
 \tau=\ip{w}{v}.
\end{equation}
Then
\[
 \rank C=2,
 \qquad
 \norm{C}_2^2=2,
 \qquad
 \abs{\Tr C}^2=4\abs{\tau}^2,
\]
and
\[
 \norm{\Tr_UC}_2^2+\norm{\Tr_VC}_2^2
 =4+2\abs{\tau}^2.
\]
Thus equality holds in \cref{eq:main} for every attainable $\tau$.  Taking
$v\perp w$ shows that the coefficient $2$ cannot be lowered, even under the
constraint $\Tr C=0$.  Taking $\tau\ne0$ then shows that, once the coefficient
$2$ is fixed, the coefficient $1/2$ cannot be lowered.
\end{proof}

\begin{remark}[Basis independence]
The proof uses transpose and coordinatewise conjugation after fixing bases,
but the conclusion \cref{eq:main} is invariant under independent unitary
changes of basis on $U$ and $V$.
\end{remark}

A useful traceless version follows by removing the scalar components of both
partial traces.  Let $m=\dim U$ and $n=\dim V$, and define
\[
 \Pi_U^0(X)=X-\frac{\Tr X}{m}I_U,
 \qquad
 \Pi_V^0(Y)=Y-\frac{\Tr Y}{n}I_V.
\]

\begin{corollary}[Sharp traceless partial-trace inequality]
\label{cor:traceless-ptr}
For every rank-at-most-two $C\in\cL(U\otimes V)$,
\begin{align}
 &\norm{\Pi_V^0(\Tr_UC)}_2^2
 +\norm{\Pi_U^0(\Tr_VC)}_2^2\notag\\
 &\qquad\le
 2\norm{C}_2^2
 +\left(\frac12-\frac1m-\frac1n\right)\abs{\Tr C}^2.
 \label{eq:traceless-ptr}
\end{align}
If $m,n\ge2$, the two coefficients are jointly sharp.
\end{corollary}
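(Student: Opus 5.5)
We will deduce the traceless inequality directly from \cref{thm:main} by an orthogonal-decomposition computation. For $X\in\cL(U)$, the Hilbert--Schmidt orthogonal decomposition $X=\Pi_U^0(X)+\frac{\Tr X}{m}I_U$ gives
\[
 \norm{\Pi_U^0(X)}_2^2=\norm{X}_2^2-\frac{\abs{\Tr X}^2}{m},
\]
and similarly on $V$ with $n$. Apply this with $X=\Tr_VC$ and $Y=\Tr_UC$. Both partial traces have full trace equal to $\Tr C$, so
\[
 \norm{\Pi_V^0(\Tr_UC)}_2^2+\norm{\Pi_U^0(\Tr_VC)}_2^2
 =\norm{\Tr_UC}_2^2+\norm{\Tr_VC}_2^2-\Bigl(\frac1m+\frac1n\Bigr)\abs{\Tr C}^2.
\]
Inserting \cref{eq:main} on the right immediately yields \cref{eq:traceless-ptr}. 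This identity is exact, so the inequality is equivalent to \cref{thm:main}.

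For joint sharpness, we reuse the equality family \cref{eq:sharp-C} from the proof of \cref{thm:main}, namely $C=P\otimes\ketbra{v}{w}$ with $\tau=\ip{w}{v}$. Since the translation above is an identity, equality in \cref{eq:main} transfers to equality in \cref{eq:traceless-ptr} for every attainable $\tau$.
\begin{itemize}
\item Taking $v\perp w$ gives $\Tr C=0$, and equality then reads $4=2\cdot 2$. So the coefficient $2$ cannot be lowered, even under the constraint $\Tr C=0$.
\item With the coefficient $2$ fixed, taking $\tau\neq0$ gives equality with $\abs{\Tr C}^2=4\abs{\tau}^2>0$. So the second coefficient $\frac12-\frac1m-\frac1n$ cannot be lowered either.
\end{itemize}
This is exactly the argument used for the sharpness part of \cref{thm:main}.

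There is no real obstacle here. The only points to check are that the scalar part of each partial trace is governed by the single quantity $\Tr C$, and that the sharpness examples exist when $m,n\ge2$. The latter holds because $P$ needs two orthonormal vectors in $U$, while $v,w$ with arbitrary overlap exist once $n\ge2$. The coefficient $\frac12-\frac1m-\frac1n$ may be negative, for example when $m=n=2$; this does not affect the argument.
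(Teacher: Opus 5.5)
Your proposal is correct and follows the paper's proof. You use the same exact orthogonal-decomposition identity to reduce the inequality to \cref{thm:main}, and the same equality family $C=P\otimes\ketbra{v}{w}$ for sharpness; the only cosmetic difference is that you transfer equality through that identity rather than recomputing the traceless norms $4-\frac4n\abs{\tau}^2$ and $\bigl(2-\frac4m\bigr)\abs{\tau}^2$ explicitly, as the paper does.
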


\begin{proof}
Orthogonal projection onto the traceless subspace gives
\[
 \norm{\Pi_V^0(\Tr_UC)}_2^2
 =\norm{\Tr_UC}_2^2-\frac1n\abs{\Tr C}^2,
\]
and similarly for the other subsystem.  Subtracting these scalar
contributions from \cref{eq:main} proves the inequality.

For sharpness, use $C$ from \cref{eq:sharp-C}.  Since
$\Tr_UC=2\ketbra{v}{w}$ and $\Tr_VC=\tau P$, one obtains
\begin{align*}
 \norm{\Pi_V^0(\Tr_UC)}_2^2
 &=4-\frac4n\abs{\tau}^2,\\
 \norm{\Pi_U^0(\Tr_VC)}_2^2
 &=\left(2-\frac4m\right)\abs{\tau}^2.
\end{align*}
Their sum is
\[
 4+\left(2-\frac4m-\frac4n\right)\abs{\tau}^2,
\]
which equals the right side of \cref{eq:traceless-ptr}.  Taking
$\tau=0$ fixes the coefficient $2$, and then taking $\tau\ne0$ fixes the
trace coefficient.
\end{proof}

\section{The exact two-copy Werner threshold}
\label{sec:werner}

We now combine \cref{thm:main} with the criterion
\cref{prop:costa-rico-k}.

\begin{theorem}
\label{thm:werner-threshold}
For every $d\ge2$, the Werner state $\rho_\alpha$ is two-copy undistillable if
and only if
\begin{equation}
\label{eq:werner-threshold}
 -\frac12\le\alpha\le1.
\end{equation}
Equivalently, its one-copy and two-copy distillability thresholds coincide.
\end{theorem}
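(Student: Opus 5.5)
By \cref{prop:costa-rico-k} with $k=2$, $\rho_\alpha$ is two-copy undistillable exactly when $q_2(\alpha,C)\ge0$ for every $C\in\cL(\C^d\otimes\C^d)$ with $\rank C\le2$. The proof therefore splits into the two directions of the threshold.

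\emph{Undistillability for $\alpha\ge-1/2$.} Fix $C$ with $\rank C\le2$ and write $a=\norm{C}_2^2$, $b=\norm{\Tr_1C}_2^2+\norm{\Tr_2C}_2^2$ and $c=\abs{\Tr C}^2$. By \cref{eq:qalpha}, $f(\alpha):=q_2(\alpha,C)=a+\alpha b+\alpha^2c$ is a convex quadratic in $\alpha$ with $a,b,c\ge0$.

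For $\alpha\ge0$ every term is nonnegative, so $f(\alpha)\ge0$ trivially.

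For $\alpha\in[-1/2,0)$ write $\alpha=-t$ with $t\in(0,1/2]$. \Cref{thm:main} gives $b\le 2a+c/2$, hence
\[
f(-t)\ge a-t\bigl(2a+\tfrac12 c\bigr)+t^2c=(1-2t)a+c\,t\bigl(t-\tfrac12\bigr).
\]
This lower bound is not obviously nonnegative, because its second term is nonpositive. The rank bound \cref{eq:trace-rank} gives $c\le 2a$, so
\[
(1-2t)a-c\,t\bigl(\tfrac12-t\bigr)\ge(1-2t)a-2at\bigl(\tfrac12-t\bigr)=a(1-2t)(1-t)\ge0 .
\]
This settles the range $-1/2\le\alpha\le1$, with the endpoint $t=1/2$ being exactly \cref{thm:main}. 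An alternative is to use convexity of $f$ between $f(0)=a\ge0$ and $f(-1/2)\ge0$. That route works as well, since a convex function nonnegative at both ends of an interval need not be nonnegative inside it in general, but here $f$ has nonnegative coefficients and no roots in $(-1/2,0)$ unless it dips below zero. The explicit estimate above is cleaner.

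\emph{Distillability for $\alpha<-1/2$.} I exhibit a rank-one $C$ with $q_2(\alpha,C)<0$, namely a product $C=\ketbra{x}{y}\otimes\ketbra{x'}{y'}$. Its partial traces factor, so
\[
q_2(\alpha,C)=\bigl(\norm{A}_2^2+\alpha\abs{\Tr A}^2\bigr)\bigl(\norm{B}_2^2+\alpha\abs{\Tr B}^2\bigr)
\]
with $A=\ketbra{x}{y}$ and $B=\ketbra{x'}{y'}$. Each factor equals $1+\alpha\abs{\ip{y}{x}}^2$ and can be tuned between $1+\alpha$ and $1$. A product of two factors in $[1+\alpha,1]$ is never negative, so this ansatz cannot succeed.

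Instead I use that for $\alpha<-1/2$ the state is already one-copy distillable \cite{DurEtAl2000}, and one-copy distillability implies $k$-copy distillability by discarding copies. In the $q_k$ language, take $C=C_1\otimes\ketbra{v}{v}$ with $C_1$ of rank two and $v$ a unit vector. Then
\[
q_2(\alpha,C)=q_1(\alpha,C_1)\,(1+\alpha).
\]
Choose $C_1=I_2\oplus0$ in $\C^d$, so that $q_1(\alpha,C_1)=\norm{C_1}_2^2+\alpha\abs{\Tr C_1}^2=2+4\alpha<0$. Since $1+\alpha>0$ for $\alpha>-1$, this gives $q_2(\alpha,C)<0$ for $\alpha\in(-1,-1/2)$. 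At $\alpha=-1$ the factor $1+\alpha$ vanishes, so I replace $\ketbra{v}{v}$ by $\ketbra{v}{w}$ with $v\perp w$, whose factor is $1$. These choices cover the whole range $-1\le\alpha<-1/2$.

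The main obstacle is the interior range $-1/2<\alpha<0$ in the first direction, where \cref{thm:main} alone is insufficient and must be combined with the rank bound \cref{eq:trace-rank}.
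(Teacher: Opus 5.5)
Your proof is correct and follows the paper's: the undistillability direction is the same combination of \cref{thm:main} with $\abs{\Tr C}^2\le2\norm{C}_2^2$, giving $q_2(\alpha,C)\ge(1+\alpha)(1+2\alpha)\norm{C}_2^2$. Your $\alpha=-1$ witness $C_1\otimes\ketbra{v}{w}$ with $v\perp w$ is exactly the paper's, and it gives $q_2=2+4\alpha<0$ on all of $[-1,-1/2)$, so the case split is unnecessary. Do delete the convexity aside, which is false as stated: a convex $f$ lies below its chords, so $f(0),f(-1/2)\ge0$ give no lower bound inside the interval. For example, $(a,b,c)=(1/2,5,10)$ satisfies $b\le2a+c/2$ and $f(0),f(-1/2)>0$, yet $f(-1/4)<0$; only the rank bound $c\le2a$ excludes such triples.
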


\begin{proof}
Let $C$ have rank at most two, and abbreviate
\[
 N=\norm{C}_2^2,
 \qquad
 S=\norm{\Tr_1C}_2^2+\norm{\Tr_2C}_2^2,
 \qquad
 T=\abs{\Tr C}^2.
\]
By \cref{thm:main},
\begin{equation}
\label{eq:S-bound}
 S\le2N+\frac12T.
\end{equation}
For $\alpha\ge0$, all terms in $\rev{q_2(\alpha,C)}=N+\alpha S+\alpha^2T$ are
nonnegative.  Now suppose $-1/2\le\alpha\le0$.  Since $\alpha\le0$,
\cref{eq:S-bound} gives
\begin{align}
 \rev{q_2(\alpha,C)}
 &\ge N+\alpha\left(2N+\frac12T\right)+\alpha^2T\notag\\
 &=(1+2\alpha)N+\alpha\left(\alpha+\frac12\right)T.
 \label{eq:q-lower-one}
\end{align}
The coefficient of $T$ is nonpositive.  By \cref{eq:trace-rank},
$T\le2N$, and therefore
\begin{align}
 \rev{q_2(\alpha,C)}
 &\ge(1+2\alpha)N
 +2\alpha\left(\alpha+\frac12\right)N\notag\\
 &=(1+\alpha)(1+2\alpha)N\ge0.
 \label{eq:q-lower-two}
\end{align}
Thus no rank-two witness exists for $\alpha\ge-1/2$, so
\cref{prop:costa-rico-k} implies two-copy undistillability.

Conversely, let $\alpha<-1/2$.  Choose a rank-two orthogonal projection
$P\in M_d(\C)$ and unit vectors $v,w\in\C^d$ with $v\perp w$.  Set
\[
 C=P\otimes\ketbra{v}{w}.
\]
Then $\rank C=2$, $\norm{C}_2^2=2$, the two partial traces have
squared Hilbert--Schmidt norms $4$ and $0$, and $\Tr C=0$.  Hence
\[
 \rev{q_2(\alpha,C)}=2+4\alpha<0.
\]
By \cref{prop:costa-rico-k}, $\rho_\alpha$ is two-copy distillable.
\end{proof}

\begin{corollary}
\label{cor:NPT-two-copy}
For $d\ge3$ and $-1/2\le\alpha<-1/d$, the Werner state $\rho_\alpha$ is NPT
and two-copy undistillable.  If it is distillable at all, at least three copies
are required.
\end{corollary}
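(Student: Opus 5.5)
The corollary has two parts, and the plan is to prove them separately by combining \cref{thm:werner-threshold} with the known separability/PPT characterization of Werner states recalled in the introduction.

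\emph{NPT.} The partial transpose of $I_{d^2}+\alpha F_d$ is $I_{d^2}+\alpha\ketbra{\Omega_d}{\Omega_d}$, as in \cref{eq:Phi-Choi}. Its spectrum is $1$, with multiplicity $d^2-1$, together with $1+\alpha d$ on $\Omega_d$. The normalization $d^2+\alpha d=d(d+\alpha)$ is positive for $\alpha\ge-1>-d$. Hence $\rho_\alpha^\Gamma$ has a negative eigenvalue exactly when $\alpha<-1/d$, so $\rho_\alpha$ is NPT on the stated interval. The interval $[-1/2,-1/d)$ is nonempty precisely because $d\ge3$ gives $-1/d>-1/2$.

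\emph{Two-copy undistillability.} Every $\alpha$ with $-1/2\le\alpha<-1/d$ satisfies $-1/2\le\alpha\le1$. Therefore \cref{thm:werner-threshold} applies directly and gives two-copy undistillability.

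\emph{At least three copies.} Suppose $\rho_\alpha$ were one-copy distillable. Then it would also be two-copy distillable. In the criterion of \cref{prop:costa-rico-k}, a rank-two witness $C$ with $q_1(\alpha,C)<0$ yields $C\otimes\ketbra{x}{x}$, which has rank two and satisfies
\[
 q_2\bigl(\alpha,C\otimes\ketbra{x}{x}\bigr)=q_1(\alpha,C)(1+\alpha)<0.
\]
This relies on the factorization of $q$ over tensor blocks, which can be checked directly. Alternatively, one can cite the one-copy threshold $\alpha<-1/2$ of \cite{DurEtAl2000}. So neither one nor two copies suffice, and if $\rho_\alpha$ is distillable at all, at least three copies are needed.

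There is no real obstacle here. The only point needing care is the monotonicity from one copy to two copies, and the one-copy threshold from the literature handles it directly.
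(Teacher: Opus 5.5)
Your proposal is correct and is essentially the paper's argument: the corollary is read off from \cref{thm:werner-threshold} on $[-1/2,1]$, the PPT range $\alpha\ge-1/d$, and the fact that two-copy undistillability already excludes one-copy distillability. Your explicit spectrum of $I_{d^2}+\alpha\ketbra{\Omega_d}{\Omega_d}$ and your padding witness $C\otimes\ketbra{x}{x}$ (with $x$ a unit vector, so that $q_1(\alpha,\ketbra{x}{x})=1+\alpha>0$) only make self-contained two facts that the paper cites or uses implicitly.
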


\begin{remark}[The endpoint beyond two copies]
Two-copy undistillability is not the same as bound entanglement.  By
\cref{thm:map-formulation,thm:Hk-equivalence}, the higher-copy endpoint asks
for tensor-stable $2$-positivity of $\Phi_d$ or, equivalently, positivity of
all operators $H_k(\psi)$.  The next two sections establish this only for
structured witnesses and for a dimension-free interval of parameters.
\end{remark}

\section{Exact tensorization for structured witnesses}
\label{sec:block-tensorization}

Although unrestricted $2$-positivity is not stable under tensor products, the
partial-trace form itself factorizes exactly on tensor-factorized operators.
This gives a broad class of witnesses for which the endpoint follows from the
one- and two-copy theorems.

\begin{lemma}[Rank-one witnesses]
\label{lem:rank-one-all-k}
For every $k\ge1$, every $\alpha\in[-1,1]$, and every operator $C$ of rank at
most one,
\begin{equation}
\label{eq:rank-one-q-positive}
 q_k(\alpha,C)\ge0.
\end{equation}
\end{lemma}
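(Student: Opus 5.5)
For a rank-one operator we may write $C=\ketbra{x}{y}$ with $x,y\in\mathcal H_k$. The plan is to reduce $q_k(\alpha,C)$ to an expectation value of a tensor product of positive operators, which is automatically nonnegative.

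First, compute each term in this representation. For $J\subseteq[k]$, the four-vector contraction identity \cref{eq:four-vector-contraction} with $e=e'=x$ and $d=d'=y$ gives
\[
 \norm{\Tr_J\ketbra{x}{y}}_2^2
 =\ip{y}{\bigl(\Tr_{J^c}\ketbra{x}{x}\otimes I_{J^c}\bigr)y}.
\]
Equivalently, this is $\Tr\bigl[(\ketbra{x}{x}\otimes\ketbra{y}{y})\,(F_J\otimes I)\bigr]$-type swap expressions on two copies of $\mathcal H_k$. I would use the cleaner swap form
\[
 \norm{\Tr_J C}_2^2=\ip{x\otimes \overline y}{\,W_J\,(x\otimes\overline y)}.
\]
Here $W_J$ acts on $\mathcal H_k\otimes\mathcal H_k$ as the tensor product, over $i\in[k]$, of $I$ for $i\notin J$ and of $\ketbra{\Omega_{H_i}}{\Omega_{H_i}}$ for $i\in J$. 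This pairing of the $i$-th factor of $x$ with the $i$-th factor of $\overline y$ is the vectorization of \cref{eq:vec-trV} applied factorwise. This is a routine coefficient check.

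Summing over $J$ with weights $\alpha^{\abs J}$, the sum factorizes:
\[
 q_k(\alpha,C)=\Bigl\langle x\otimes\overline y,\ \bigotimes_{i=1}^k\bigl(I+\alpha\ketbra{\Omega_{H_i}}{\Omega_{H_i}}\bigr)\,x\otimes\overline y\Bigr\rangle .
\]
Each factor $I+\alpha\ketbra{\Omega}{\Omega}$ has eigenvalues $1$ and $1+\alpha\dim H_i$. These are nonnegative for $\alpha\ge0$, but for $\alpha<0$ this unnormalized version fails. So the choice of pairing matters, and this is the main obstacle.

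The fix is to instead pair $x$ with $x$ and $y$ with $y$, that is, to use $\rho_x=\ketbra{x}{x}$ and $\rho_y=\ketbra{y}{y}$. Then
\[
 \norm{\Tr_J\ketbra{x}{y}}_2^2=\Tr\bigl[(\rho_x\otimes\rho_y^{T})\,\textstyle\bigotimes_{i\in J}F_i\bigr],
\]
where $F_i$ swaps the two copies of $H_i$; this follows from \cref{eq:vec-trU} applied factorwise. The weighted sum then becomes
\[
 q_k(\alpha,C)=\Tr\Bigl[(\rho_x\otimes\rho_y^{T})\bigotimes_{i=1}^k(I+\alpha F_i)\Bigr].
\]
For $\abs\alpha\le1$ each $I+\alpha F_i$ is positive semidefinite, since its eigenvalues are $1\pm\alpha$. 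Their tensor product is therefore positive, and $\rho_x\otimes\rho_y^T$ is positive. Hence the trace is nonnegative, proving \cref{eq:rank-one-q-positive}. The case $C=0$ is trivial.

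The step to verify carefully is the swap identity. Writing $\ip{y}{(\Tr_{J^c}\rho_x\otimes I)y}=\Tr[(\Tr_{J^c}\rho_x)(\Tr_{J^c}\rho_y)^{*}]$, one must check that the conjugation and transpose conventions make it equal to $\Tr[(\rho_x\otimes\rho_y^T)F_J]$, where $\rho_y^T$ is still positive. Since $\Tr[(A\otimes B)F_J]=\Tr[\Tr_{J^c}A\,\Tr_{J^c}B]$ holds for arbitrary $A$ and $B$, taking $A=\rho_x$ and $B=\overline{\rho_y}=\rho_y^T$ gives the identity. Note that $\overline{\rho_y}$ is positive.
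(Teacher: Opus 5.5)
The strategy works, but your central swap identity is off by a conjugation and is false as written. Take $J=[k]$. Your formula gives $\Tr(\rho_x\rho_y^T)=\abs{\ip{\overline y}{x}}^2$, whereas the left side is $\abs{\Tr\ketbra{x}{y}}^2=\abs{\ip{y}{x}}^2$. For $k=1$ and $x=y=(e_1+ie_2)/\sqrt2$, these are $0$ and $1$.

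The slip is in your last paragraph. $\Tr_{J^c}\rho_y$ is Hermitian, so $(\Tr_{J^c}\rho_y)^*=\Tr_{J^c}\rho_y$. By contrast, $\Tr_{J^c}\overline{\rho_y}=(\Tr_{J^c}\rho_y)^T$, so choosing $B=\overline{\rho_y}$ was wrong. Your general identity $\Tr[(A\otimes B)F_J]=\Tr[\Tr_{J^c}A\,\Tr_{J^c}B]$ is correct. Applied with $B=\rho_y$, it gives
\[
 \norm{\Tr_J\ketbra{x}{y}}_2^2=\Tr\bigl[(\rho_x\otimes\rho_y)F_J\bigr],
 \qquad
 q_k(\alpha,C)=\Tr\Bigl[(\rho_x\otimes\rho_y)\bigotimes_{i=1}^k(I+\alpha F_i)\Bigr]\ge0.
\]
Once the transpose is deleted, your argument is complete, and the appeal to positivity of $\rho_y^T$ is no longer needed.

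The route you abandoned is in fact the paper's proof. Since $J_{\mathcal H_k}(\ketbra{x}{y})=x\otimes\overline y$, your first formula is exactly the paper's $q_k(\alpha,C)=\ip{J_{\mathcal H_k}(C)}{(\bigotimes_iG_{\alpha,i})J_{\mathcal H_k}(C)}$ with $G_{\alpha,i}=I+\alpha\ketbra{\Omega_i}{\Omega_i}$. The non-positivity of $G_{\alpha,i}$ for $\alpha<0$ is harmless. $\bigotimes_iG_{\alpha,i}$ is the partial transpose of the positive operator $\bigotimes_i(I+\alpha F_i)$, so it is block positive across the first-copies/second-copies cut. The vector $x\otimes\overline y$ is a product vector for that cut.

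Your corrected computation is the same argument unrolled. Writing $\Gamma$ for partial transpose on the second factor, $\ip{a\otimes b}{W^{\Gamma}(a\otimes b)}=\Tr[W(\ketbra{a}{a}\otimes\ketbra{\overline b}{\overline b})]$, and with $a=x$, $b=\overline y$ this is where the transpose should have disappeared. Your version is slightly more elementary: it is a trace of a product of two positive operators, with no partial transposes or block positivity. The paper's phrasing makes explicit that only Schmidt rank one of $J_{\mathcal H_k}(C)$ is used. That is exactly what breaks down for rank-two witnesses.
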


\begin{proof}
For each $i$, let $F_i$ be the flip and let $\Omega_i$ be the unnormalized
maximally entangled vector on $H_i\otimes H_i$.  Since
\[
 I+\alpha F_i\succeq0\qquad(-1\le\alpha\le1),
\]
its partial transpose
$G_{\alpha,i}=I+\alpha\ketbra{\Omega_i}{\Omega_i}$ is block positive.
After canonically regrouping all first tensor factors against all second
tensor factors, $\bigotimes_iG_{\alpha,i}$ is the partial transpose of the
positive operator $\bigotimes_i(I+\alpha F_i)$ and is therefore block
positive across that bipartition.  Expanding the product and using the
coefficient map \cref{eq:JH} gives
\[
 \ip{J_{\mathcal H_k}(C)}
 {\left(\bigotimes_{i=1}^kG_{\alpha,i}\right)J_{\mathcal H_k}(C)}
 =\sum_{J\subseteq[k]}\alpha^{\abs J}\norm{\Tr_JC}_2^2
 =q_k(\alpha,C).
\]
If $\rank C\le1$, then \cref{eq:rank-schmidt} makes
$J_{\mathcal H_k}(C)$ a product vector (or zero), so block positivity proves
the claim.  This argument requires neither normalization nor equal
dimensions, and it also covers one-dimensional factors.
\end{proof}

\begin{proposition}[Exact block factorization]
\label{prop:q-factorization}
Let $\mathcal P=\{B_1,\ldots,B_m\}$ be a partition of $[k]$, and suppose
\begin{equation}
\label{eq:C-block-factor}
 C=C_{B_1}\otimes\cdots\otimes C_{B_m},
 \qquad
 C_{B_r}\in\cL\!\left(\bigotimes_{i\in B_r}H_i\right).
\end{equation}
Then
\begin{equation}
\label{eq:q-block-factor}
 q_k(\alpha,C)
 =\prod_{r=1}^m q_{\abs{B_r}}(\alpha,C_{B_r}).
\end{equation}
\end{proposition}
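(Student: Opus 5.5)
The plan is to reduce the identity to two elementary facts: partial traces act factorwise on tensor products, and Hilbert--Schmidt norms are multiplicative on tensor products. Fix $J\subseteq[k]$ and write $J_r:=J\cap B_r$ for $r=1,\ldots,m$. Since the blocks partition $[k]$, the map $J\mapsto(J_1,\ldots,J_m)$ is a bijection from subsets of $[k]$ onto tuples of subsets $J_r\subseteq B_r$, and $\abs J=\sum_r\abs{J_r}$. This bijection is what allows the sum defining $q_k$ to be reorganized as a product.

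The first step is to record that, after the canonical subsystem-labelled regrouping of $\mathcal H_k$ as $\bigotimes_r\bigl(\bigotimes_{i\in B_r}H_i\bigr)$, one has
\[
 \Tr_J\bigl(C_{B_1}\otimes\cdots\otimes C_{B_m}\bigr)
 =\Tr_{J_1}C_{B_1}\otimes\cdots\otimes\Tr_{J_m}C_{B_m}.
\]
Here $\Tr_{J_r}$ acts inside the block $B_r$. For elementary tensors this is immediate, since the partial trace over a set of factors multiplies the traces of those factors. The general case then follows by linearity, because each $C_{B_r}$ is a sum of elementary tensors in its own block. The conventions $\Tr_\varnothing=\id$ and $\Tr_{B_r}C_{B_r}=\Tr C_{B_r}$ fit in without change, since scalars are operators on the one-dimensional space.

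The second step applies multiplicativity of the Hilbert--Schmidt norm, $\norm{X\otimes Y}_2=\norm{X}_2\norm{Y}_2$, which holds because $(X\otimes Y)^*(X\otimes Y)=X^*X\otimes Y^*Y$ and the trace is multiplicative. This gives
\[
 \alpha^{\abs J}\norm{\Tr_JC}_2^2
 =\prod_{r=1}^m\alpha^{\abs{J_r}}\norm{\Tr_{J_r}C_{B_r}}_2^2.
\]
Summing over $J$ through the bijection, the sum of products over independent choices of $J_r\subseteq B_r$ factorizes as $\prod_r\sum_{J_r\subseteq B_r}\alpha^{\abs{J_r}}\norm{\Tr_{J_r}C_{B_r}}_2^2$. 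The $r$th factor is exactly $q_{\abs{B_r}}(\alpha,C_{B_r})$, computed on the spaces $H_i$ with $i\in B_r$, and this yields \cref{eq:q-block-factor}.

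No step is a genuine obstacle. The only care needed is bookkeeping: the blocks need not be contiguous intervals of $[k]$, so the identification must use the subsystem-labelled embedding. Any unitary reordering of tensor factors preserves partial traces, up to the same relabelling, and preserves Hilbert--Schmidt norms, so the argument is unaffected by block order.
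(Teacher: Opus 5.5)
Your proposal is correct and follows the same route as the paper. Like the paper, you set $J_r=J\cap B_r$ so that $\Tr_J C=\bigotimes_r\Tr_{J_r}C_{B_r}$, use multiplicativity of the Hilbert--Schmidt norm, and regroup the sum over $J\subseteq[k]$ as a product over the power sets of the blocks; you additionally spell out the bookkeeping (the bijection, linearity, and non-contiguous blocks) that the paper leaves implicit.
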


\begin{proof}
For every $J\subseteq[k]$,
\[
 \Tr_JC
 =\bigotimes_{r=1}^m\Tr_{J\cap B_r}C_{B_r},
\]
so the squared Hilbert--Schmidt norm is the product of the corresponding
block norms.  Summing over all $J$ factors into the product of the $m$ power
sets, which is \cref{eq:q-block-factor}.
\end{proof}

\begin{theorem}[Rank-two cores on at most two copy blocks]
\label{thm:block-core}
Let $C$ be nonzero with $\rank C\le2$ and suppose it admits a factorization
\cref{eq:C-block-factor}.  If every factor of rank two is supported on a
block of cardinality at most two, then
\begin{equation}
\label{eq:block-core-positive}
 q_k(\alpha,C)\ge0
 \qquad\text{for every }\alpha\in[-1/2,1].
\end{equation}
In particular, such a $C$ cannot witness $k$-copy distillability at the
endpoint.
\end{theorem}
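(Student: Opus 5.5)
The argument reduces to the factorization \cref{prop:q-factorization} and shows that each block factor contributes a nonnegative number. Since $C\ne0$ and $C=C_{B_1}\otimes\cdots\otimes C_{B_m}$, every factor is nonzero and $\rank C=\prod_r\rank C_{B_r}\le2$. Hence at most one factor has rank two and all other factors have rank exactly one. By \cref{eq:q-block-factor}, $q_k(\alpha,C)=\prod_r q_{\abs{B_r}}(\alpha,C_{B_r})$, so it suffices to show that each factor $q_{\abs{B_r}}(\alpha,C_{B_r})$ is nonnegative for $\alpha\in[-1/2,1]$.

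The rank-one factors are handled directly by \cref{lem:rank-one-all-k}, which applies with arbitrary factor dimensions $H_i$, $i\in B_r$, and every $\alpha\in[-1,1]$.

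For the possible rank-two factor $C_{B}$ there are two cases.

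\textbf{Case $\abs B=1$.} Here $q_1(\alpha,C_B)=\norm{C_B}_2^2+\alpha\abs{\Tr C_B}^2$. For $\alpha\ge0$ this is clearly nonnegative. For $\alpha\in[-1/2,0)$, \cref{eq:trace-rank} gives $\abs{\Tr C_B}^2\le2\norm{C_B}_2^2$, hence $q_1\ge(1+2\alpha)\norm{C_B}_2^2\ge0$.

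\textbf{Case $\abs B=2$.} Here $C_B\in\cL(H_i\otimes H_j)$ has rank two, and $q_2(\alpha,C_B)\ge0$ on $[-1/2,1]$ is exactly the computation in the proof of \cref{thm:werner-threshold}. That computation uses only \cref{thm:main} (valid for arbitrary $U,V$) and \cref{eq:trace-rank}. For $\alpha\ge0$ all terms are nonnegative. For $\alpha\in[-1/2,0]$ the two estimates give $q_2\ge(1+\alpha)(1+2\alpha)\norm{C_B}_2^2\ge0$.

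Multiplying the nonnegative factors gives $q_k(\alpha,C)\ge0$. The final sentence of the statement then follows from \cref{prop:costa-rico-k} at $\alpha=-1/2$.

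There is no real obstacle; the only points needing care are bookkeeping. One is that ranks multiply under tensor products, so nonzero factors have rank one except for at most one factor of rank two. The other is that \cref{thm:main} and \cref{lem:rank-one-all-k} hold for unequal factor dimensions, which they do as stated.
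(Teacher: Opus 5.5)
Your proposal is correct and follows essentially the same route as the paper: you use the exact block factorization, note that ranks multiply so there is at most one rank-two factor, and handle the rank-one factors by the rank-one lemma. You then treat a rank-two factor on one copy with the trace-rank bound and on two copies with the sharp partial-trace inequality combined with the two-copy threshold computation.
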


\begin{proof}
Ranks multiply under tensor products.  Since their product is at most two,
all block factors have rank one except possibly one factor of rank two.  By
\cref{lem:rank-one-all-k}, every rank-one block contributes a nonnegative
factor to \cref{eq:q-block-factor}.  A possible rank-two factor lives on one
or two copies.  On one copy,
\[
 q_1(\alpha,D)=\norm{D}_2^2+\alpha\abs{\Tr D}^2\ge0
 \qquad(\alpha\ge-1/2),
\]
by \cref{eq:trace-rank}.  On two copies, the dimension-free
\cref{thm:main} and the argument in
\cref{eq:q-lower-one,eq:q-lower-two} give
$q_2(\alpha,D)\ge0$ for $\alpha\in[-1/2,0]$, while the assertion is
immediate for $\alpha\ge0$.  Thus every factor in
\cref{eq:q-block-factor} is nonnegative.
\end{proof}

\begin{corollary}[Copy-irreducibility of a possible counterexample]
\label{cor:copy-irreducible}
If $q_k(-1/2,C)<0$ for some rank-two $C$, then no tensor decomposition of
$C$ consists entirely of rank-one factors or localizes its unique possible
rank-two factor to one or two copy blocks.
\end{corollary}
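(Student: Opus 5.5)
This is the contrapositive of \cref{thm:block-core} at $\alpha=-1/2$, so the plan is simply to unwind that theorem. Suppose $C$ has rank at most two and $q_k(-1/2,C)<0$. First observe that $C\neq0$, since $q_k(-1/2,0)=0$. Now assume, for contradiction, that $C$ admits a factorization \cref{eq:C-block-factor} relative to some partition $\mathcal P=\{B_1,\ldots,B_m\}$ of $[k]$, and that this factorization either has all factors of rank one, or has its rank-two factor supported on a block with $\abs{B_r}\le2$.

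Next, record the rank bookkeeping. Ranks multiply under tensor products, and $1\le\rank C\le2$. Hence every factor $C_{B_r}$ is nonzero, and at most one factor has rank two while all the others have rank one. This is why the rank-two factor is \emph{unique} when it exists, which justifies the phrase ``unique possible rank-two factor'' in the statement. Under our assumption, the hypotheses of \cref{thm:block-core} are therefore met: every rank-two factor, of which there is at most one, lives on a block of cardinality at most two.

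Then apply \cref{thm:block-core} with $\alpha=-1/2\in[-1/2,1]$. Concretely, \cref{prop:q-factorization} writes $q_k(-1/2,C)$ as $\prod_r q_{\abs{B_r}}(-1/2,C_{B_r})$. Each rank-one factor is nonnegative by \cref{lem:rank-one-all-k}. The possible rank-two factor is nonnegative by \cref{eq:trace-rank} when it lives on one copy, and by \cref{thm:main} together with \cref{eq:q-lower-one,eq:q-lower-two} when it lives on two copies. So the product is nonnegative, contradicting $q_k(-1/2,C)<0$.

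There is no substantive obstacle here. The only points needing care are the degenerate cases, namely $C=0$ (excluded because $q_k=0$ there) and the all-rank-one decomposition. Both are covered by the same product argument. The statement quantifies over \emph{every} tensor decomposition, so the contradiction must be run for an arbitrary partition, and it is, since \cref{thm:block-core} holds for any partition.
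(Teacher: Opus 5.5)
Your proposal is correct and takes the paper's route: the paper gives no separate proof of this corollary, which is the contrapositive of \cref{thm:block-core} at $\alpha=-1/2$. Your unwinding is exactly the argument of that theorem, namely rank multiplicativity, \cref{prop:q-factorization}, \cref{lem:rank-one-all-k}, and the one- and two-copy estimates from \cref{eq:trace-rank} and \cref{thm:main}.
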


The theorem does not assume that the rank-one blocks are single-copy
operators; they may themselves act on arbitrarily many copies.  What is
excluded is precisely a witness obtained by padding a one- or two-copy
rank-two obstruction with rank-one tensor factors.  No canonical finest
tensor factorization is asserted.

\section{Dimension-free bounds for an arbitrary finite number of copies}
\label{sec:k-bounds}

We next obtain an explicit interval of $k$-copy undistillability.  The argument
combines two complementary bipartite inequalities.  For
$C\in\cL(H_J\otimes H_{J^c})$ of rank at most two, set
\begin{equation}
\label{eq:NTA}
 N=\norm{C}_2^2,
 \qquad T=\abs{\Tr C}^2,
 \qquad A_J=\norm{\Tr_JC}_2^2.
\end{equation}
The sharp sum inequality proved here and the difference inequality of Costa
Rico \cite[Theorem~2]{CostaRico2025} give
\begin{align}
 A_J+A_{J^c}&\le2N+\frac12T,
 \label{eq:complement-sum}\\
 \abs{A_J-A_{J^c}}&\le2N-\frac12T.
 \label{eq:complement-difference}
\end{align}
For rank two, the second estimate is exactly the rank term in the cited
theorem.  For rank one, that theorem gives the stronger bound
$\abs{A_J-A_{J^c}}\le N-T$, which implies
\cref{eq:complement-difference}; the zero operator is immediate.  Thus both
displayed estimates hold uniformly on the rank-at-most-two class.  Two
consequences will be used
repeatedly:
\begin{align}
 A_J&\le2N,
 \label{eq:single-marginal-2N}\\
 wA_J+vA_{J^c}&\le2wN+\frac v2T
 \quad(0\le v\le w).
 \label{eq:weighted-complement}
\end{align}
The second follows by writing the left side as a positive combination of
$A_J+A_{J^c}$ and $A_J-A_{J^c}$ and applying
\cref{eq:complement-sum,eq:complement-difference}.

For $a\in[0,1/2]$, define the following polynomials.  If $k=2m+1\ge3$ is odd,
set
\begin{equation}
\label{eq:pk-odd}
 p_{2m+1}(a)
 :=1-2\sum_{\ell=0}^{m-1}
 \binom{2m+1}{2\ell+1}a^{2\ell+1}.
\end{equation}
If $k=2m\ge2$ is even, set
\begin{equation}
\label{eq:pk-even}
\begin{split}
 p_{2m}(a)
 := {}&1+2a^{2m}
 -\sum_{\substack{1\le j<m\\ j\ \mathrm{odd}}}
 \binom{2m}{j}\bigl(2a^j+a^{2m-j}\bigr)\\
 &-\mathbf 1_{\{m\ \mathrm{odd}\}}
 \frac32\binom{2m}{m}a^m.
\end{split}
\end{equation}
For example,
\begin{align*}
 p_2(a)&=1-3a+2a^2=(1-a)(1-2a),\\
 p_3(a)&=1-6a,\\
 p_4(a)&=1-8a-4a^3+2a^4,\\
 p_5(a)&=1-10a-20a^3.
\end{align*}

There is a closely related earlier dimension-free bound.  Define
\[
 r_k(a):=1+(1-a)^k-(1+a)^k.
\]
Qi, Gupur, Wu, and Guo proved that if $\eta_k$ is the zero of $r_k$ in
$(0,1]$, then $\alpha\ge-\eta_k$ implies $k$-copy undistillability
\cite[Theorem~IV.2]{QiGupurWuGuo2024}.  The present polynomials give a
strict finite-$k$ improvement.  Indeed, direct comparison with
\cref{eq:pk-odd,eq:pk-even} yields
\begin{align}
 p_k(a)-r_k(a)
 &=
 \begin{cases}
  2a^k, & k\ \mathrm{odd},\\[3pt]
  2a^k+
  \displaystyle\sum_{\substack{1\le j<k/2\\j\ \mathrm{odd}}}
  \binom{k}{j}a^{k-j}
  +\displaystyle\mathbf1_{\{k/2\ \mathrm{odd}\}}
  \frac12\binom{k}{k/2}a^{k/2},
  & k\ \mathrm{even}.
 \end{cases}
 \label{eq:pk-prior-comparison}
\end{align}
Hence $p_k(a)>r_k(a)$ for $a>0$, and consequently
$\gamma_k>\eta_k$.  The two families nevertheless have the same leading
$1/k$ asymptotic scale, as made explicit below.

\begin{theorem}[Complement-pair lower bound]
\label{thm:pk-lower}
Let $k\ge2$, let $0\le a\le1/2$, and let
$C\in\cL(H_1\otimes\cdots\otimes H_k)$ have rank at most two.  Then
\begin{equation}
\label{eq:pk-lower}
 q_k(-a,C)\ge p_k(a)\norm{C}_2^2.
\end{equation}
\end{theorem}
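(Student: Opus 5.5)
The plan is to expand $q_k(-a,C)=\sum_{J\subseteq[k]}(-a)^{\abs J}A_J$, with $A_\varnothing=N$ and $A_{[k]}=T$ as in \cref{eq:NTA}, and to group the $2^k$ terms into complementary pairs $\{J,J^c\}$. The rank-at-most-two hypothesis is used only through the bipartite estimates \cref{eq:complement-sum,eq:complement-difference,eq:single-marginal-2N,eq:weighted-complement}, the trace bound $T\le2N$ from \cref{eq:trace-rank}, and $A_J\ge0$. These hold for every bipartition $J\,|\,J^c$, because $C$ viewed on $H_J\otimes H_{J^c}$ still has rank at most two. So the whole proof is a bookkeeping exercise. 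For each complementary pair I choose which of these estimates to apply, so that every pair contributes at least an explicit multiple of $N$, possibly plus a nonnegative multiple of $T$. At the end I bound the leftover $T$ terms using $0\le T\le2N$.

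\emph{Odd $k=2m+1$.} Every pair consists of one odd set $J$, of size $j$, and one even set $J^c$, of size $k-j$. The pair contributes $-a^jA_J+a^{k-j}A_{J^c}$. When $j\le 2m-1$, I drop the nonnegative even term and use $A_J\le2N$ from \cref{eq:single-marginal-2N}. This gives the contribution $-2a^jN$ per set, and summing over the $\binom{k}{j}$ sets produces exactly the sum in \cref{eq:pk-odd}. The only remaining pair is $J=[k]$ with $J^c=\varnothing$, which contributes $N-a^kT$. The target polynomial has constant term exactly $1$, so this term must not be estimated crudely by $N-2a^kN$; that estimate would only recover $r_k+a^k$-type bounds. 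Instead I would not drop all the even terms. I would keep the even terms $a^{k-j}A_{J^c}$ with $k-j$ small, in particular $j=2m-1$ so that $\abs{J^c}=2$, and apply the weighted inequality \cref{eq:weighted-complement} with $w=a^j$ and $v=a^{k-j}$. When $v\ge w$, I use the difference inequality instead. Either choice releases positive multiples of $T$. The goal is to choose the weights so that the accumulated $+T$ coefficients exactly absorb $-a^kT$ while every $N$ coefficient stays as in $p_k$.

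\emph{Even $k=2m$.} Both sets in a pair have the same parity, and for even $j$ the pair is entirely nonnegative and can be dropped. For odd $j<m$, the pair is $-a^jA_J-a^{k-j}A_{J^c}$. Applying \cref{eq:weighted-complement} with $w=a^j$ and $v=a^{k-j}$ bounds it below by $-2a^jN-\tfrac12a^{k-j}T$, and $T\le2N$ then turns this into $-(2a^j+a^{k-j})N$, matching \cref{eq:pk-even}. When $m$ is odd, the middle sets with $\abs J=m$ pair among themselves. There \cref{eq:complement-sum} gives $-a^m(A_J+A_{J^c})\ge -a^m(2N+\tfrac12T)\ge-3a^mN$, which summed over the $\tfrac12\binom{2m}{m}$ pairs yields the $\tfrac32\binom{2m}{m}a^m$ term. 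The pair $\{\varnothing,[k]\}$ gives $N+a^kT\ge N$. To obtain the extra $+2a^{2m}$ in $p_{2m}$, I need to retain some positive $T$ contribution, for instance by not collapsing all the $T$ terms with $T\le2N$ at once but treating the total $T$ coefficient jointly with $+a^kT$. Some slack is also available because the $\tfrac12 a^{k-j}T$ losses can be partly offset.

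The main obstacle is this global $T$-bookkeeping. Dropping positive terms is easy, but the exact polynomials $p_k$ (constant term $1$, and the $+2a^{2m}$ term) require that the net coefficient of $T$ be handled by a single final application of $0\le T\le2N$, with the correct sign. I would first collect all contributions into the form $\lambda N+\mu T$. When $\mu\ge0$, I use $T\ge0$; when $\mu<0$, I use $T\le2N$. I would then check, separately for odd and even $k$, that this yields $\lambda+\min(0,2\mu)\ge p_k(a)$ for $0\le a\le\tfrac12$, verifying the small cases $p_2,p_3,p_4$ by hand as a sanity check.
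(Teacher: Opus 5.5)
Your plan is essentially the paper's proof. You pair each $J$ with $J^c$, bound each pair via \cref{eq:complement-sum,eq:complement-difference}, collect an affine bound $\lambda N+\mu T$, and use $0\le T\le2N$ only once at the end. What remains is to check $\mu\ge0$ for odd $k$ (your pairs with $\abs{J^c}=2$ suffice; the paper uses the singleton pairs) and $\mu\le0$ for even $k$, which is exactly where $a\le\tfrac12$ enters (e.g.\ $\mu=a^2-a/2$ for $k=2$).

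One correction: for a mixed-sign pair $-wA_J+vA_{J^c}$ with $v\le w$ (the only case at $k=3$), \cref{eq:weighted-complement} as stated only gives $-2wN-\tfrac v2T$. The estimate that releases $+\tfrac v2T$ is $-wA_J+vA_{J^c}=-\tfrac{w-v}{2}(A_J+A_{J^c})-\tfrac{w+v}{2}(A_J-A_{J^c})\ge-2wN+\tfrac v2T$, as in the paper.
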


\begin{proof}
Assume $C\ne0$ and use the notation \cref{eq:NTA}.  We pair each subset
$J\subseteq[k]$ with its complement.

Suppose first that $k=2m+1$ is odd.  Keep the complementary pair
$(\varnothing,[k])$ exact; it contributes $N-a^kT$.  For every remaining
pair choose the representative with $1\le j=\abs J\le m$.  The two
coefficients have opposite signs.  If $j$ is odd, then
$a^j\ge a^{k-j}$ and
\cref{eq:complement-sum,eq:complement-difference} give
\begin{equation}
\label{eq:odd-pair-one}
 -a^jA_J+a^{k-j}A_{J^c}
 \ge-2a^jN+\frac12a^{k-j}T.
\end{equation}
\begingroup
\color{revisioncyan}
Indeed, for $0\le v\le w$,
\[
 -wA_J+vA_{J^c}
 =-\frac{w-v}{2}(A_J+A_{J^c})
  -\frac{w+v}{2}(A_J-A_{J^c})
 \ge-2wN+\frac v2T,
\]
where the last step uses
\cref{eq:complement-sum,eq:complement-difference}.  Taking
$w=a^j$ and $v=a^{k-j}$ proves \cref{eq:odd-pair-one}.
\endgroup
If $j$ is even (hence $j\ge2$), then
\begin{equation}
\label{eq:odd-pair-two}
 a^jA_J-a^{k-j}A_{J^c}
 \ge-2a^{k-j}N+\frac12a^{k-j}T,
\end{equation}
which follows from
$A_{J^c}-A_J\le2N-T/2$ and $a^j\ge a^{k-j}$.
Together, with the binomial multiplicities included, the negative
$N$-terms in \cref{eq:odd-pair-one,eq:odd-pair-two} run once through all odd
cardinalities $1,3,\ldots,k-2$.  The coefficient of $T$, including the exact
full-set term $-a^kT$, is nonnegative, because the $k$ pairs represented by
singletons already contribute $k a^{k-1}T/2$.  Thus it is at least
\[
 -a^k+\frac{k}{2}a^{k-1}
 =a^{k-1}\left(\frac{k}{2}-a\right)\ge0.
\]
Discarding that nonnegative contribution gives
\cref{eq:pk-lower} with \cref{eq:pk-odd}.

Now let $k=2m$ be even.  Keep the pair
$(\varnothing,[k])$ exact; it contributes $N+a^{2m}T$.  All other
complementary subsets have the same parity.  Pairs of positive even
cardinality contribute nonnegative terms and may be discarded.  If $j<m$ is
odd, then \cref{eq:weighted-complement} gives
\begin{equation}
\label{eq:even-pair}
 -a^jA_J-a^{k-j}A_{J^c}
 \ge-2a^jN-\frac12a^{k-j}T.
\end{equation}
If $m$ is odd, the middle layer $\abs J=m$ consists of
$\binom{2m}{m}/2$ complementary pairs, and
\cref{eq:complement-sum} bounds their total contribution below by
\[
 -\frac12\binom{2m}{m}a^m
 \left(2N+\frac12T\right).
\]
Combining these estimates with the exact empty--full pair gives an affine
lower bound in $T$, namely
\begin{align*}
 q_{2m}(-a,C)
 \ge{}&
 \left[
 1-2\sum_{\substack{1\le j<m\\j\ \mathrm{odd}}}
 \binom{2m}{j}a^j
 -\mathbf1_{\{m\ \mathrm{odd}\}}\binom{2m}{m}a^m
 \right]N\\
 &+\left[
 a^{2m}
 -\frac12\sum_{\substack{1\le j<m\\j\ \mathrm{odd}}}
 \binom{2m}{j}a^{2m-j}
 -\mathbf1_{\{m\ \mathrm{odd}\}}
 \frac14\binom{2m}{m}a^m
 \right]T.
\end{align*}
The $T$-coefficient is nonpositive for
$0\le a\le1/2$: if $m$ is even, the
$j=1$ term in \cref{eq:even-pair} already gives
$a^{2m}-(2m/2)a^{2m-1}\le0$; if $m$ is odd, the middle-layer term gives
$a^{2m}-\binom{2m}{m}a^m/4\le0$.
\rev{For the latter inequality, $a^m\le2^{-m}\le\binom{2m}{m}/4$.}
We may therefore use
$T\le2N$, which is \cref{eq:trace-rank}.  The resulting coefficient of $N$
is exactly \cref{eq:pk-even}.
\end{proof}

\begin{definition}
\label{def:gamma-k}
For $k\ge2$, let $\gamma_k$ be the unique zero of $p_k$ in $(0,1/2]$.
\end{definition}

The zero is well-defined.  For odd $k$, every nonconstant term in
$p_k$ has a negative coefficient, so $p_k$ is strictly decreasing.  For even
$k\ge4$, the $j=1$ term is present and
\[
 p_k'(a)\le 2ka^{k-1}-2k<0
 \qquad (0<a\le1/2),
\]
because all remaining derivatives are nonpositive.  The case $k=2$ follows
from $p_2(a)=(1-a)(1-2a)$.  Moreover, $p_k(0)=1$,
$p_2(1/2)=0$, and $p_k(1/2)<0$ for $k\ge3$: for odd $k$ the singleton term
alone gives $p_k(1/2)\le1-k$, while for even $k\ge4$ the singleton term gives
\[
 p_k(1/2)\le
 1+2^{1-k}-k-k2^{1-k}<0.
\]
Thus there is exactly one zero in $(0,1/2]$.

\begin{corollary}[Dimension-free $k$-copy interval]
\label{cor:gamma-k}
For every $d\ge2$, every $k\ge2$, and every $\alpha\in[-1,1]$,
\begin{equation}
\label{eq:gamma-interval}
 \alpha\ge-\gamma_k
 \quad\Longrightarrow\quad
 \rho_\alpha^{(d)}\text{ is $k$-copy undistillable}.
\end{equation}
On the other hand, $\rho_\alpha^{(d)}$ is $k$-copy distillable whenever
$\alpha<-1/2$.
\end{corollary}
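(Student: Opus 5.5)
The plan is to reduce everything to \cref{thm:pk-lower} together with the criterion \cref{prop:costa-rico-k}, splitting by the sign of $\alpha$.

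\emph{Case $\alpha\ge0$.} Every term $\alpha^{\abs J}\norm{\Tr_JC}_2^2$ in $q_k(\alpha,C)$ is nonnegative. Hence $q_k(\alpha,C)\ge0$ for every $C$, and \cref{prop:costa-rico-k} gives $k$-copy undistillability.

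\emph{Case $-\gamma_k\le\alpha<0$.} Put $a=-\alpha\in(0,\gamma_k]$; since $\gamma_k\le1/2$, we have $a\in[0,1/2]$. By \cref{thm:pk-lower}, every $C$ of rank at most two satisfies $q_k(-a,C)\ge p_k(a)\norm{C}_2^2$, so it suffices to know $p_k(a)\ge0$ on $[0,\gamma_k]$. The discussion after \cref{def:gamma-k} shows that $p_k(0)=1$ and that $\gamma_k$ is the unique zero of $p_k$ in $(0,1/2]$. By continuity, $p_k$ cannot change sign on $[0,\gamma_k)$, so $p_k\ge0$ on $[0,\gamma_k]$. (For $k=2$, note that $p_2(a)=(1-a)(1-2a)\ge0$ on $[0,1/2]$ directly.) Therefore no rank-two witness exists, and \cref{prop:costa-rico-k} again gives undistillability.

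\emph{The converse, $\alpha<-1/2$.} Here I would build a witness by padding the two-copy witness with rank-one factors, using \cref{prop:q-factorization}. Take
\[
 C=P\otimes\ketbra{v}{w}\otimes D^{\otimes(k-2)},
\]
where $P$ is a rank-two projection, $v\perp w$ are unit vectors, and $D$ is a single-copy rank-one factor with $q_1(\alpha,D)>0$. Then $\rank C=2$, and by factorization
\[
 q_k(\alpha,C)=q_2\bigl(\alpha,P\otimes\ketbra{v}{w}\bigr)\,q_1(\alpha,D)^{k-2}=(2+4\alpha)\,q_1(\alpha,D)^{k-2},
\]
where the two-copy value $2+4\alpha$ is the one computed in \cref{thm:werner-threshold}. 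To make the second factor positive, choose $D=\ketbra{x}{y}$ with unit $x\perp y$. Then $q_1(\alpha,D)=\norm{D}_2^2+\alpha\abs{\Tr D}^2=1>0$, so $q_k(\alpha,C)=2+4\alpha<0$, and \cref{prop:costa-rico-k} gives $k$-copy distillability.

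\textbf{Main obstacle.} The only step needing care is the sign of $p_k$ on $[0,\gamma_k]$. The uniqueness argument already provided gives this, and an alternative is the monotonicity noted there: $p_k$ is strictly decreasing for $k\ge3$.
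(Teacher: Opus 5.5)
Your proof is correct. The undistillability half follows the paper's argument: split by the sign of $\alpha$, apply \cref{thm:pk-lower} with $a=-\alpha$, and conclude with \cref{prop:costa-rico-k}. Your check that $p_k\ge0$ on $[0,\gamma_k]$, via $p_k(0)=1$ and uniqueness of the zero, fills in a step the paper leaves implicit.

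For the converse you take a different route. The paper notes that for $\alpha<-1/2$ the state is already one-copy distillable, by the classical threshold of \cite{DurEtAl2000}, and that extra copies may be discarded. You instead build the explicit rank-two witness $C=P\otimes\ketbra{v}{w}\otimes D^{\otimes(k-2)}$ and use \cref{prop:q-factorization} to get $q_k(\alpha,C)=(2+4\alpha)\cdot 1^{k-2}<0$.

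The paper's route is shorter. However, it relies on an external one-copy result and on the standard but unstated fact that the Schmidt-rank-two criterion is monotone in the number of copies. Your route stays inside the partial-trace criterion already used for the other direction, so both halves of the corollary rest on one characterization.

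Your construction is in effect the algebraic form of ``discarding copies.'' The rank-one factors $D$ with $q_1(\alpha,D)=1$ play the role of product test vectors with positive expectation. Padding the one-copy witness instead, $C=P\otimes D^{\otimes(k-1)}$ with $q_1(\alpha,P)=2+4\alpha$, reproduces the paper's argument exactly at the level of the form.
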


\begin{proof}
For $-\gamma_k\le\alpha\le0$, write $a=-\alpha$ and apply
\cref{thm:pk-lower}; for $\alpha\ge0$, every term in
\cref{eq:qk-def} is nonnegative.  The criterion
\cref{prop:costa-rico-k} proves undistillability.  If $\alpha<-1/2$, the state
is already one-copy distillable, and extra copies may be discarded.
\end{proof}

The first values are shown in \cref{tab:gamma}.
\begin{table}[H]
\centering
\caption{\rev{The earlier constants $\eta_k$ of
\cite{QiGupurWuGuo2024} and the sharpened complement-pair constants
$\gamma_k$.  Every entry was independently recomputed as the unique positive
zero of $r_k$ or $p_k$, respectively; digits are rounded.}}
\label{tab:gamma}
{\color{revisioncyan}\footnotesize
\begin{tabular}{c@{\quad}c@{\quad}c@{\qquad}
                c@{\quad}c@{\quad}c}
\toprule
$k$ & $\eta_k$ & $\gamma_k$ & $k$ & $\eta_k$ & $\gamma_k$\\
\midrule
2 & $0.250000000000$ & $0.500000000000$ & 6 & $0.081523681041$ & $0.081955364899$\\
3 & $0.165164800491$ & $0.166666666667$ & 7 & $0.069728502945$ & $0.069728504012$\\
4 & $0.123133086084$ & $0.124103599697$ & 8 & $0.060912117517$ & $0.060914841075$\\
5 & $0.098109482185$ & $0.098111200886$ & 9 & $0.054073411933$ & $0.054073411934$\\
\bottomrule
\end{tabular}}
\end{table}

\begin{corollary}[Asymptotics]
\label{cor:gamma-asymptotic}
Let $\varphi=(1+\sqrt5)/2$.  Then
\begin{equation}
\label{eq:gamma-asymptotic}
 \lim_{k\to\infty}k\gamma_k
 =\operatorname{arsinh}\!\left(\frac12\right)
 =\log\varphi.
\end{equation}
\end{corollary}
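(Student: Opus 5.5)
The plan is to rescale $a=x/k$ and show that $p_k(x/k)$ converges, uniformly for $x$ in compact subsets of $[0,\infty)$, to the strictly decreasing function
\[
 f(x):=1-2\sinh x .
\]
The unique zero of $f$ is $x_0=\operatorname{arsinh}(1/2)=\log\varphi$. The limit of $k\gamma_k$ will then follow from monotonicity of $p_k$ and a sign-change argument.

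\emph{Step 1: rewrite $p_k$ through $r_k$.} I will use \cref{eq:pk-prior-comparison}, which gives $p_k=r_k+E_k$, where
\[
 r_k(a)=1+(1-a)^k-(1+a)^k .
\]
For odd $k$ the error term is $E_k(a)=2a^k$. For even $k$ it is
\[
 E_k(a)=2a^k+\sum_{\substack{1\le j<k/2\\ j\ \mathrm{odd}}}\binom{k}{j}a^{k-j}+\mathbf 1_{\{k/2\ \mathrm{odd}\}}\frac12\binom{k}{k/2}a^{k/2}.
\]
The main term converges: $(1\pm x/k)^k\to e^{\pm x}$ uniformly on $[0,X]$, by the standard estimate of $k\log(1\pm x/k)\mp x$. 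Hence $r_k(x/k)\to 1+e^{-x}-e^{x}=f(x)$ uniformly on $[0,X]$.

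\emph{Step 2: the error terms vanish.} For $x\in[0,X]$ and $k>2X$, every exponent $k-j$ in $E_k$ is at least $k/2$, and $x/k<1$. Bounding $\binom{k}{j}\le 2^k$ gives
\[
 0\le E_k(x/k)\le 2(X/k)^k+k\,2^k(X/k)^{k/2}+\tfrac12\,2^k(X/k)^{k/2}.
\]
This is at most $(k+2)\,(4X/k)^{k/2}$, which tends to $0$. Thus $p_k(x/k)\to f(x)$ uniformly on $[0,X]$.

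\emph{Step 3: localize the zero.} Fix $0<\varepsilon<x_0$ and take $X=x_0+\varepsilon$. Since $f$ is strictly decreasing, $f(x_0-\varepsilon)>0>f(x_0+\varepsilon)$. By uniform convergence, for all large $k$ we have
\[
 p_k\bigl((x_0-\varepsilon)/k\bigr)>0>p_k\bigl((x_0+\varepsilon)/k\bigr),
\]
and both arguments lie in $(0,1/2]$. After \cref{def:gamma-k} the paper shows that $p_k$ is strictly decreasing on $(0,1/2]$ with unique zero $\gamma_k$. Therefore
\[
 (x_0-\varepsilon)/k<\gamma_k<(x_0+\varepsilon)/k,
\]
so $\abs{k\gamma_k-x_0}<\varepsilon$ for all large $k$.

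Finally, $\sinh x_0=1/2$ gives $e^{x_0}-e^{-x_0}=1$, so $e^{x_0}=\varphi$ and $x_0=\log\varphi$.

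The only delicate point is Step 2 for even $k$. There the extra sums contain up to about $k/4$ terms with large binomial coefficients, so they must be controlled uniformly. The crude bound $\binom{k}{j}\le 2^k$ against $(X/k)^{k/2}$ is enough, because $(4X/k)^{k/2}$ decays superexponentially.
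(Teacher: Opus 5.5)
Your proposal is correct and takes essentially the same route as the paper. Both proofs rescale $a=c/k$, take $r_k(c/k)=1-2S_k(c)\to1-2\sinh c$ as the main term, kill the extra terms of \cref{eq:pk-prior-comparison} uniformly with the crude bound $2^k(M/k)^{k/2}$, and then locate $k\gamma_k$ by a sign change together with the strict monotonicity of $p_k$ on $(0,1/2]$.
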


\begin{proof}
Put $a=c/k$ and define the odd binomial sum
\[
 S_k(c):=\sum_{\substack{0\le j\le k\\ j\ \mathrm{odd}}}
 \binom{k}{j}\left(\frac ck\right)^j
 =\frac12\left[\left(1+\frac ck\right)^k
                 -\left(1-\frac ck\right)^k\right].
\]
Hence $S_k(c)\to\sinh c$ uniformly for $c$ in every compact interval.  If
$k$ is odd, \cref{eq:pk-odd} gives exactly
\[
 p_k(c/k)=1-2S_k(c)+2(c/k)^k.
\]
If $k=2m$ is even, pairing the odd terms in $S_k$ around the middle layer
shows that
\begin{align*}
 p_k(c/k)={}&1-2S_k(c)+2(c/k)^k\\
 &+\sum_{\substack{1\le j<m\\j\ \mathrm{odd}}}
   \binom{k}{j}(c/k)^{k-j}
 +\mathbf1_{\{m\ \mathrm{odd}\}}
   \frac12\binom{k}{m}(c/k)^m.
\end{align*}
For $0\le c\le M$, all terms after $1-2S_k(c)$ are bounded by a
constant multiple of $2^k(M/k)^{k/2}$, and therefore converge uniformly to
zero.  Consequently,
\[
 p_k(c/k)\longrightarrow1-2\sinh c
\]
uniformly on compact $c$-intervals.

Let $c_0=\operatorname{arsinh}(1/2)$.  Fix
$0<\varepsilon<c_0$.  Uniform convergence gives, for all sufficiently
large $k$,
\[
 p_k\!\left(\frac{c_0-\varepsilon}{k}\right)>0,
 \qquad
 p_k\!\left(\frac{c_0+\varepsilon}{k}\right)<0.
\]
Since $p_k$ is strictly decreasing, its zero satisfies
$c_0-\varepsilon<k\gamma_k<c_0+\varepsilon$.  Thus
$k\gamma_k\to c_0=\log\varphi$.  The same calculation applied to
$r_k(c/k)=1-2S_k(c)$ shows that the earlier constants also satisfy
$k\eta_k\to c_0$; the gain in \cref{eq:pk-prior-comparison} is a
finite-$k$ sharpening.
\end{proof}

\begin{corollary}[Finite-copy NPT intervals]
\label{cor:finite-copy-NPT}
If $d>1/\gamma_k$, then every Werner state with
\begin{equation}
\label{eq:finite-copy-NPT}
 -\gamma_k\le\alpha<-\frac1d
\end{equation}
is simultaneously NPT and $k$-copy undistillable; in particular, this
interval is nonempty.
\end{corollary}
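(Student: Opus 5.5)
This is a direct combination of two earlier facts: the NPT criterion for Werner states and \cref{cor:gamma-k}.

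\emph{Step 1: the interval is nonempty.} The hypothesis $d>1/\gamma_k$ is equivalent to $-\gamma_k<-1/d$. Hence the half-open interval $[-\gamma_k,-1/d)$ contains $-\gamma_k$. It also lies inside $[-1,1]$, because $\gamma_k\in(0,1/2]$ by \cref{def:gamma-k} and the uniqueness discussion after it.

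\emph{Step 2: NPT.} As recalled in the introduction, $\rho_\alpha$ has positive partial transpose exactly when $\alpha\ge-1/d$. If one wants this self-contained, the partial transpose is proportional to $I+\alpha\ketbra{\Omega_d}{\Omega_d}$ by \cref{eq:Phi-Choi}. Its smallest eigenvalue is $1+\alpha d$, which is negative precisely when $\alpha<-1/d$. So every $\alpha$ in \cref{eq:finite-copy-NPT} gives an NPT state.

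\emph{Step 3: $k$-copy undistillability.} Every $\alpha$ in the interval satisfies $\alpha\ge-\gamma_k$ and $\alpha\in[-1,1]$. Therefore \cref{cor:gamma-k} applies directly and yields $k$-copy undistillability in dimension $d$.

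No step is a real obstacle. The only point needing care is the bookkeeping in Step 1: the range of $\gamma_k$ must be checked so that the interval is a genuine, nonempty subset of the admissible parameter range $[-1,1]$.
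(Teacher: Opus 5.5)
Your proposal is correct and matches the paper, which states this corollary without a written proof because it follows immediately from \cref{cor:gamma-k} together with the criterion, recalled in the introduction, that $\rho_\alpha$ is PPT exactly when $\alpha\ge-1/d$. Your eigenvalue check via \cref{eq:Phi-Choi} (where $1+\alpha d$ is the smallest eigenvalue since $\alpha<0$ on this interval) and your remark that $\gamma_k\le1/2$ keeps the interval inside $[-1,1]$ fill in the bookkeeping the paper leaves implicit.
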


The constants $\gamma_k$ are certified lower bounds for the unknown sharp
$k$-copy threshold, not claims of optimality when $k\ge3$.  For comparison,
Pablo Costa Rico proved endpoint positivity for a nontrivial three-copy subclass:
self-adjoint rank-two operators with one positive and one negative eigenvalue
\cite[Proposition~3]{CostaRico2025}.  The general rank-two three-copy form,
and hence the entire interval between $-1/2$ and $-\gamma_3=-1/6$, remains
unresolved by the present argument.

\section{Kronecker-sum consequences}
\label{sec:kronecker}

The traceless partial-trace estimate has a direct dual formulation.  Let
$m=\dim U$, $n=\dim V$, and let $A\in\cL(U)$ and $B\in\cL(V)$ be traceless.
Set
\begin{equation}
\label{eq:kronecker-M}
 M=A\otimes I_V+I_U\otimes B.
\end{equation}
For every $Z\in\cL(U\otimes V)$,
\begin{equation}
\label{eq:ptr-duality}
 \ip{Z}{M}_{\HS}
 =\ip{\Pi_U^0(\Tr_VZ)}{A}_{\HS}
 +\ip{\Pi_V^0(\Tr_UZ)}{B}_{\HS}.
\end{equation}

\begin{corollary}[Sharp dimension-dependent Kronecker-sum bound]
\label{cor:general-kronecker}
Assume $m,n\ge2$.  For traceless $A\in M_m(\C)$ and $B\in M_n(\C)$,
\begin{equation}
\label{eq:general-kronecker}
 s_1(M)^2+s_2(M)^2
 \le\kappa_{m,n}\bigl(\norm{A}_2^2+\norm{B}_2^2\bigr),
\end{equation}
where
\begin{equation}
\label{eq:kappa}
 \kappa_{m,n}
 =\begin{cases}
 2, & \displaystyle \frac1m+\frac1n\ge\frac12,\\[6pt]
 \displaystyle 3-\frac2m-\frac2n,
 & \displaystyle \frac1m+\frac1n<\frac12.
 \end{cases}
\end{equation}
For every pair $(m,n)$, the constant $\kappa_{m,n}$ is attained.
\end{corollary}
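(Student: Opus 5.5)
The plan is to dualize the traceless partial-trace inequality, \cref{cor:traceless-ptr}, through the Ky Fan duality of \cref{lem:kyfan-duality} and the trace identity \cref{eq:ptr-duality}. Sharpness will come from the equality configurations already exhibited in the proof of \cref{cor:traceless-ptr}.

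\emph{Upper bound.} By \cref{lem:kyfan-duality},
\[
 s_1(M)^2+s_2(M)^2=\max_{\rank Z\le2,\ \norm{Z}_2=1}\abs{\ip{Z}{M}_{\HS}}^2 .
\]
Fix such a $Z$. Apply \cref{eq:ptr-duality}, then Cauchy--Schwarz in $\C^2$ on the two Hilbert--Schmidt pairings:
\[
 \abs{\ip{Z}{M}_{\HS}}^2\le\Bigl(\norm{\Pi_U^0(\Tr_VZ)}_2^2+\norm{\Pi_V^0(\Tr_UZ)}_2^2\Bigr)\bigl(\norm{A}_2^2+\norm{B}_2^2\bigr).
\]
By \cref{cor:traceless-ptr}, the first factor is at most $2+(\frac12-\frac1m-\frac1n)\abs{\Tr Z}^2$. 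By \cref{eq:trace-rank} we also have $0\le\abs{\Tr Z}^2\le2$.
\begin{itemize}
\item If $\frac1m+\frac1n\ge\frac12$, the coefficient of $\abs{\Tr Z}^2$ is nonpositive, so the factor is at most $2$.
\item Otherwise the coefficient is positive, and inserting $\abs{\Tr Z}^2\le2$ gives $2+1-\frac2m-\frac2n=3-\frac2m-\frac2n$.
\end{itemize}
Both cases give exactly $\kappa_{m,n}$ as in \cref{eq:kappa}.

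\emph{Attainment.} The main obstacle is choosing $A,B$ so that every inequality above is tight simultaneously. Cauchy--Schwarz forces $(A,B)$ to be proportional to the traceless partial traces of an extremal $Z$. I would take $Z$ from the family \cref{eq:sharp-C}, with $P$ the projection onto $\operatorname{span}\{u_0,u_1\}$.

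\emph{Case $\kappa_{m,n}=2$.} Take $v\perp w$ unit vectors, and set $A=0$ and $B=\ketbra{v}{w}$; $B$ is traceless because $v\perp w$. Then $M=I_U\otimes B$ has $m\ge2$ singular values equal to $1$. Hence $s_1^2+s_2^2=2=2\norm{B}_2^2$.

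\emph{Case $\kappa_{m,n}=3-\frac2m-\frac2n$.} Take $Z\propto P\otimes\ketbra{v}{v}$, which is extremal in \cref{eq:trace-rank}. Set
\[
 A=\Pi_U^0(P)=P-\tfrac2mI_U,\qquad B=2\Pi_V^0(\ketbra{v}{v})=2\bigl(\ketbra{v}{v}-\tfrac1nI_V\bigr).
\]
Then $M$ is Hermitian and diagonal in a product eigenbasis adapted to $P$ and $v$. On the two-dimensional space $\ran P\otimes\C v$ its eigenvalue is
\[
 \Bigl(1-\frac2m\Bigr)+2\Bigl(1-\frac1n\Bigr)=\kappa_{m,n}.
\]
A short computation gives $\norm{A}_2^2=2-\frac4m$ and $\norm{B}_2^2=4(1-\frac1n)$, whose sum is $2\kappa_{m,n}$.

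It remains to check that the remaining eigenvalues of $M$ are bounded in modulus by $\kappa_{m,n}$. These are the sums of an eigenvalue of $A$ (namely $1-\frac2m$ or $-\frac2m$) and an eigenvalue of $B$ (namely $2-\frac2n$ or $-\frac2n$). The most negative is $-\frac2m-\frac2n$, which exceeds $-\frac12$ in this regime, while $\kappa_{m,n}>2$. Given this, $s_1^2+s_2^2=2\kappa_{m,n}^2=\kappa_{m,n}\bigl(\norm{A}_2^2+\norm{B}_2^2\bigr)$, so equality holds.
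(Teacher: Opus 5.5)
Your proposal is correct and follows the paper's proof essentially step for step. For the upper bound you use the same ingredients: Ky Fan duality, the identity \cref{eq:ptr-duality} with Cauchy--Schwarz, and \cref{cor:traceless-ptr} combined with $\abs{\Tr Z}^2\le2$. For attainment you use the same extremal pairs $(0,\ketbra{v}{w})$ and $\bigl(P-\tfrac2m I_m,\ 2(\ketbra{v}{v}-\tfrac1n I_n)\bigr)$.

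One small slip needs fixing. When $\tfrac1m+\tfrac1n<\tfrac12$, the most negative eigenvalue $-\tfrac2m-\tfrac2n$ exceeds $-1$, not $-\tfrac12$; for example, $m=n=5$ gives $-\tfrac45$. Since $\kappa_{m,n}>2$, the conclusion $\abs{\kappa_{m,n}-3}<\kappa_{m,n}$ still holds.
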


\begin{proof}
Let $Z$ have rank at most two and $\norm{Z}_2=1$.  By
\cref{eq:ptr-duality}, Cauchy--Schwarz in a direct sum, and
\cref{eq:traceless-ptr},
\begin{align}
 \abs{\ip{Z}{M}_{\HS}}^2
 &\le\bigl(\norm{A}_2^2+\norm{B}_2^2\bigr)
 \Bigl[
 \norm{\Pi_U^0(\Tr_VZ)}_2^2
 +\norm{\Pi_V^0(\Tr_UZ)}_2^2
 \Bigr]\notag\\
 &\le\bigl(\norm{A}_2^2+\norm{B}_2^2\bigr)
 \left[2+
 \left(\frac12-\frac1m-\frac1n\right)\abs{\Tr Z}^2
 \right].
 \label{eq:kronecker-intermediate}
\end{align}
If the coefficient of $\abs{\Tr Z}^2$ is nonpositive, the bracket is at most
$2$.  If it is positive, \cref{eq:trace-rank} gives
$\abs{\Tr Z}^2\le2$, so the bracket is at most
$3-2/m-2/n$.  Taking the maximum over rank-two unit $Z$ and using
\cref{lem:kyfan-duality} proves the inequality.

It remains to show sharpness.  Suppose first that
$1/m+1/n\ge1/2$.  Choose orthonormal $v,w\in\C^n$, take
$A=0$ and $B=\ketbra{v}{w}$.  Then $B$ is traceless,
$\norm{B}_2=1$, and the singular value $1$ of $B$ is repeated $m$ times in
$I_m\otimes B$.  Hence the left side of \cref{eq:general-kronecker} is $2$.

Now suppose that $1/m+1/n<1/2$.  Let $P$ be a rank-two orthogonal projection
in $\C^m$ and let $Q=\ketbra{v}{v}$ be a rank-one orthogonal projection in
$\C^n$.  Set
\begin{equation}
\label{eq:general-sharp-AB}
 A=P-\frac2mI_m,
 \qquad
 B=2\left(Q-\frac1nI_n\right).
\end{equation}
Both matrices are traceless, and, with $\kappa=\kappa_{m,n}$,
\[
 \norm{A}_2^2+\norm{B}_2^2
 =6-\frac4m-\frac4n=2\kappa.
\]
The Kronecker sum is Hermitian.  On $\ran(P)\otimes\ran(Q)$ it has the
eigenvalue $\kappa$ with multiplicity two.  Its remaining eigenvalues are
$\kappa-1$, $\kappa-2$, and $\kappa-3$ (with the evident
multiplicities), all of smaller absolute value because $\kappa>2$.
Therefore its two largest singular values are both $\kappa$, and both sides
of \cref{eq:general-kronecker} equal $2\kappa^2$.
\end{proof}

Among equal local dimensions, the two-ququart case is exactly the point at
which the scalar correction in \cref{eq:traceless-ptr} cancels.

\begin{corollary}[Sharp $4\times4$ Kronecker-sum inequality]
\label{cor:kronecker-four}
For traceless $A,B\in M_4(\C)$,
\begin{equation}
\label{eq:kronecker-four}
 s_1(A\otimes I_4+I_4\otimes B)^2
 +s_2(A\otimes I_4+I_4\otimes B)^2
 \le2\bigl(\norm{A}_2^2+\norm{B}_2^2\bigr).
\end{equation}
The constant is attained.
\end{corollary}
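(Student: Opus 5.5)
This is the special case $m=n=4$ of \cref{cor:general-kronecker}. Since $1/4+1/4=1/2\ge1/2$, we are in the first branch of \cref{eq:kappa}, so $\kappa_{4,4}=2$. The inequality \cref{eq:kronecker-four} is then exactly \cref{eq:general-kronecker} with $M=A\otimes I_4+I_4\otimes B$.

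For a self-contained argument, I would retrace the short route directly. Take any $Z$ with $\rank Z\le2$ and $\norm{Z}_2=1$. Apply the duality \cref{eq:ptr-duality} and Cauchy--Schwarz in the direct sum to bound $\abs{\ip{Z}{M}_{\HS}}^2$ by $(\norm{A}_2^2+\norm{B}_2^2)$ times the sum of the squared traceless partial-trace norms. Then apply \cref{cor:traceless-ptr} with $m=n=4$. There the coefficient $\tfrac12-\tfrac14-\tfrac14$ of $\abs{\Tr Z}^2$ vanishes exactly, so the bracket equals $2$ and no appeal to \cref{eq:trace-rank} is needed. This is the cancellation the preceding sentence of the paper alludes to. Maximizing over such $Z$ and invoking \cref{lem:kyfan-duality} turns the left side into $s_1(M)^2+s_2(M)^2$.

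For attainment, I would reuse the first sharpness example from \cref{cor:general-kronecker}. Take $A=0$ and $B=\ketbra{v}{w}$ with orthonormal $v,w\in\C^4$. Then $B$ is traceless with $\norm{B}_2^2=1$, and $I_4\otimes B$ has singular value $1$ with multiplicity $4$. Hence the left side of \cref{eq:kronecker-four} equals $2=2(\norm{A}_2^2+\norm{B}_2^2)$.

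There is no real obstacle; the only point to check is that the borderline case $1/m+1/n=1/2$ falls in the branch $\kappa=2$. It does, because the inequality in \cref{eq:kappa} is non-strict there, and the trace coefficient is exactly zero.
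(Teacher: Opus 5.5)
Your proof is correct and is the paper's proof: the inequality is the case $m=n=4$ of \cref{cor:general-kronecker}, where the trace coefficient $\tfrac12-\tfrac1m-\tfrac1n$ in \cref{eq:traceless-ptr} vanishes. The only difference is the equality witness: you reuse $A=0$, $B=\ketbra{v}{w}$ from the general corollary, while the paper uses $A=B=\frac14\operatorname{diag}(1,-1,0,0)$, and both attain the constant.
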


\begin{proof}
The inequality is the case $m=n=4$ of
\cref{cor:general-kronecker}.  For equality, take
\begin{equation}
\label{eq:sharp-AB}
 A=B=\frac14\operatorname{diag}(1,-1,0,0).
\end{equation}
Then $\norm{A}_2^2+\norm{B}_2^2=1/4$, while the two largest singular values of
$A\otimes I_4+I_4\otimes B$ are both $1/2$.  Both sides of
\cref{eq:kronecker-four} equal $1/2$.
\end{proof}

\section{Discussion and outlook}
\label{sec:discussion}

The two-copy proof isolates a dimension-free geometric chain:
\begin{align*}
 &\text{rank-two symmetric projection bound}
 \Longrightarrow \text{double-skew Ky Fan bound}\\
 &\hspace{2cm}\Longrightarrow \text{maximally mixed qubit positivity}
 \Longrightarrow \text{rank-two partial-trace inequality}.
\end{align*}
The endpoint $\alpha=-1/2$ turns the last inequality into the two-copy Werner
form, giving the exact threshold.  The same partial-trace language remains
useful for arbitrary $k$, but the geometry changes qualitatively: the
operator $H_k(\psi)$ has a full alternating hierarchy of marginals, rather
than the single double-antisymmetric sector controlled at $k=2$.

For fixed $d\ge2$ and $k\ge1$, the exact formulations in
\cref{sec:k-formulations} identify three equivalent targets for future work:
\begin{equation}
\label{eq:three-targets}
\begin{aligned}
 &q_k\!\left(-\frac12,C\right)\ge0
 &&\text{for every }C\in\cL((\C^d)^{\otimes k})
   \text{ with }\rank C\le2\\
 &\qquad\Longleftrightarrow\
 \Phi_d^{\otimes k}\text{ is $2$-positive}
 \ \Longleftrightarrow\
 H_k(\psi)\succeq0
 &&\text{for every admissible }\psi,
\end{aligned}
\end{equation}
where in the last condition $U_1=\cdots=U_k=\C^d$.
They also prevent a misleading induction.  The map
$\Phi_d^{\otimes2}$ is $2$-positive, but tensoring it with another copy of
$\Phi_d$ is precisely the new assertion to be proved; it cannot be assumed.
Likewise, the recursion for $q_k$ applies the lower-copy form to partial
traces whose rank may be arbitrarily larger than two.

Two positive results survive tensorization.  The exact factorization theorem
shows that any negative witness cannot be reduced, through a tensor
decomposition, to a rank-two factor on only one or two copy blocks.  The
complement-pair bounds sharpen the dimension-free finite-copy intervals of
\cite{QiGupurWuGuo2024}; for sufficiently large $d$ these intervals contain
NPT states.  The asymptotic scale
$\gamma_k\sim(\log\varphi)/k$, which is shared by the earlier bounds,
quantifies the output of this complement-pair argument.

The central unresolved question is still the endpoint
$q_k(-1/2,C)\ge0$ for unrestricted rank-two $C$ and $k\ge3$.  A proof for all
$k$ would establish NPT bound entanglement in the Werner family.  A
counterexample for some finite $k$ would instead exhibit strictly collective
distillability requiring more than two copies.  The three-copy operator
\cref{eq:H3}, together
with the copy-irreducibility constraint \cref{cor:copy-irreducible}, is a
concrete next target.

Other directions remain worthwhile independently of the many-copy problem.
A classification of equality in the double-skew and partial-trace inequalities
would clarify the exterior-algebra mechanism behind saturated
Kronecker-sum cases \cite{LiuChen2026}.  A direct sum-of-squares factorization
of $H_2(\psi)$ could also suggest which alternating marginal combinations in
$H_3$ admit geometric control.

\section*{Acknowledgments}
We thank Felix Huber for promptly sharing the concurrent manuscript.

\section*{AI-use statement}
OpenAI's ChatGPT 5.6 Sol provided an initial proof.  The authors reconstructed and revised the
arguments and take responsibility for the mathematical content.  Further
details are recorded in the provenance note.

\end{document}